\documentclass[11pt,onecolumn]{article}
\usepackage[margin=1in]{geometry}
\usepackage{amsmath,amssymb,amsthm}
\usepackage{graphicx}
\usepackage{bm}
\usepackage{mathtools}
\usepackage{float}
\usepackage{placeins}
\usepackage{booktabs}
\usepackage{array}
\usepackage{microtype}
\usepackage[colorlinks=true,linkcolor=blue,citecolor=blue,urlcolor=blue]{hyperref}
\hypersetup{
  pdftitle={What solvable models reveal about Born--Oppenheimer, Born--Huang, and exact factorization},
  pdfauthor={Stephen Wiggins},
  pdfsubject={Born--Oppenheimer, Born--Huang, exact factorization, and solvable molecular benchmarks},
  pdfkeywords={Born--Oppenheimer approximation, Born--Huang expansion, exact factorization, quantum metric, derivative coupling}
}

\newcommand{\R}{\mathbf{R}}
\newcommand{\rr}{\mathbf{r}}
\newcommand{\bra}[1]{\langle #1 |}
\newcommand{\ket}[1]{| #1 \rangle}
\newcommand{\bk}[2]{\langle #1 | #2 \rangle}
\newcommand{\Tn}{\hat{T}_{\mathrm{nuc}}}
\newcommand{\Hel}{\hat{H}_{\mathrm{el}}}

\newcommand{\Ecal}{\mathcal{E}}

\newtheorem{proposition}{Proposition}

\title{\bf What solvable models reveal about\\
Born--Oppenheimer, Born--Huang, and exact factorization}

\author{Stephen Wiggins\\[2pt]
\small Hetao Institute of Mathematics and Interdisciplinary Science, Shenzhen, China\\
\small University of Bristol, Bristol, United Kingdom}

\date{}

\begin{document}
\maketitle

\begin{abstract}
\noindent
The Born--Oppenheimer (BO) and single-surface Born--Huang (BH) approximations replace the full molecular problem by nuclear motion associated with one clamped electronic state, whereas the full BH expansion and exact factorization (EF) are exact representations. We use two analytically transparent models to identify what controls one-surface error. For Fern\'andez's bilinearly coupled oscillators, the Brattsev ground-state bracket was already known; we derive closed-form differences proving strictness for every positive nuclear mass and nonzero stable coupling. Using Hunter's conditional-amplitude construction, we place the closed-form ground-state EF potential alongside BO and single-surface BH. Exact excited-state calculations show that the bracket does not extend through the spectrum. The diagonal BH correction is the quantum-metric coefficient of the retained electronic state divided by twice the nuclear mass. Fern\'andez's sixth-order expansion shows that the next omitted correction contains the same derivative couplings, an additional inverse electronic-energy separation, and a factor depending on the nuclear state. We therefore add a two-state model with a position-dependent gap. Decreasing its minimum separation makes the metric higher and narrower while leaving the total Fubini--Study length equal to $\pi/2$. The exact EF, BO, and BH potentials can differ visibly even when their nuclear ground-state amplitudes are nearly identical. The two models show that one-surface accuracy depends on nuclear mass, electronic-state variation, energy separation from omitted states, and the nuclear wavefunction.
\end{abstract}

\medskip
\noindent\textbf{Keywords:} Born--Oppenheimer approximation; Born--Huang expansion; exact factorization; quantum metric; Fubini--Study geometry; derivative coupling.

\section{Introduction}
\label{sec:intro}

What, precisely, is the potential energy surface on which nuclei move?  The exact
nonrelativistic molecular Schr\"odinger equation describes electrons and nuclei together.  In
principle, an exact solution of that equation would already give the molecular energy and
wavefunction, and no separate potential for the nuclei would be needed.  For realistic
molecules, however, solving the full electron--nuclear problem directly is not computationally
tractable on present classical computers.  The Born--Oppenheimer construction begins by
holding the nuclear positions fixed, solving an electronic Schr\"odinger equation, and then
using the resulting electronic energies in a calculation of nuclear motion~\cite{BO1927}.
The potential-energy surface is therefore not an additional ingredient put into the exact
molecular problem at the start; it appears when one asks for a nuclear description.  Recent
pedagogical discussions emphasize the same logical order and, in particular, that fixing the
nuclei while solving the electronic problem is a step in the construction rather than the
Born--Oppenheimer approximation itself~\cite{AgostiniCurchod2026,PrljEtAl2026}.

There is more than one way to obtain such a nuclear description.  In the Born--Huang
construction, the electronic states calculated at fixed nuclear geometry are used as a basis
for the molecular wavefunction~\cite{BornHuang1954}.  If all of those electronic states are
retained, this is an exact representation of the molecular problem.  Restricting the
representation to one electronic state produces a one-surface approximation.  The strict
Born--Oppenheimer approximation used in this paper omits the terms that arise when the
nuclear kinetic-energy operator differentiates that electronic state.  The corresponding
single-surface Born--Huang approximation keeps the derivative contribution that acts within
the retained electronic state while still omitting terms that couple it to the other
electronic states.  We will derive these statements rather than assume them.

Exact factorization provides a different exact description of the same molecular state.  It
writes the molecular wavefunction as one nuclear amplitude multiplied by one normalized
electronic factor~\cite{Hunter1974,Hunter1975,Cederbaum2013,AbediMaitraGross2010,AbediMaitraGross2012}.
The relation to the full Born--Huang representation is especially simple.  Once the
Born--Huang coefficients have been introduced, the same elementary linear-algebra operation
can describe them either as components in a basis or as a length and a normalized direction.
Sch\"urger, Lassmann, Agostini, and Curchod recently made this connection explicit using the
Born--Huang coefficient vector~\cite{SchurgerLassmannAgostiniCurchod2025}.
Section~\ref{sec:hierarchy} will develop this
relation only after all of the objects entering it have been defined.

The distinctions are easy to state but difficult to isolate in realistic molecular
calculations.  The electronic problem must itself be approximated, only a finite set of
electronic states can usually be retained, the nuclear equations must be discretized and
converged, and a complete electronic description may require continuum states.  An observed
error can therefore mix several effects.  The purpose of the present paper is
narrower.  We use models simple enough that the full molecular problem and the relevant
one-surface approximations can be placed side by side, in several cases with exact
calculations.  Removing the other complications lets us ask what controls the error caused
specifically by restricting the electron--nuclear problem to one electronic state.

The first benchmark is the bilinearly coupled-oscillator model used by Fern\'andez in his
sixth-order study of corrections to the Born--Oppenheimer approximation
\cite{Fernandez1994}.  His expansion extends the earlier isotope- and harmonic-oscillator
correction analyses of Van Vleck and Patterson~\cite{VanVleck1936,Patterson1993}.  Exact
solution of coupled harmonic oscillators by normal coordinates is standard
~\cite{ZunigaBastidaRequena2017}, and McCoy later used the same bilinear oscillator family
to examine the corresponding one-surface approximation and its ground-state error
~\cite{McCoy2001}.
The value of the model here is not its novelty but its transparency.  From one Hamiltonian we
can calculate the exact molecular energies, the Born--Oppenheimer and single-surface
Born--Huang energies, the change of the electronic states with nuclear position, and the
ground-state exact-factorization potential.  Fern\'andez already stated and verified the
Brattsev ground-state bracket for this oscillator~\cite{Fernandez1994,Brattsev1965}.  The closed-form differences derived below
show directly that the bracket is strict for every positive nuclear mass and every nonzero
stable coupling, without restricting the statement to the range of his mass-ratio series; exact
excited states show that the same ordering is not an excited-state rule.

The derivative terms in the Born--Huang calculation also contain geometric information, but
we postpone the geometric language until the derivative itself has been derived.  The basic
question is elementary: when a normalized electronic state changes as the nuclei move, how
much of that change is a physically different electronic state and how much is only a change
of phase?  The squared size of the physically different part is the quantum, or
Fubini--Study, metric.  Once that quantity has been constructed, the diagonal Born--Huang
correction turns out to be the metric divided by twice the nuclear mass in one dimension.
This relation is standard and has been emphasized in recent molecular quantum-geometry work
\cite{ProvostVallee1980,KronenbergerSteinmetzAppenzellerPausch2025,XieLiuGu2025}.  The
geometric terminology is introduced in Section~\ref{sec:metric} only after the calculation
that gives it meaning.

The Fern\'andez model has one limitation: its electronic energy gaps are constant.
We therefore add a two-state model in which the electronic energy separation changes with
nuclear position.  The model is intentionally minimal: its $2\times2$ electronic matrix can
be diagonalized explicitly, so we can see exactly how changing the minimum separation
changes the electronic eigenvectors and the corresponding derivative terms.  The same model
then lets us compare the Born--Oppenheimer, single-surface Born--Huang, and
exact-factorization nuclear descriptions numerically.

The paper follows the order in which the ideas are needed.  Section~\ref{sec:hierarchy}
starts from the exact molecular equation, constructs the fixed-nuclei electronic problem,
and then derives the full Born--Huang representation, the two one-surface approximations, and
the elementary components-versus-length-and-direction relation to exact factorization.
Section~\ref{sec:fernandez} develops the Fern\'andez benchmark.  Section~\ref{sec:metric}
returns to the electronic derivatives already encountered and separates the part changed by a
phase choice from the perpendicular change of the electronic state.  Section~\ref{sec:needEF} then applies the same construction to the
normalized electronic factor in exact factorization.  Sections~\ref{sec:vibronic} and
\ref{sec:EFvib} study the variable-gap model.  Section~\ref{sec:UQ} finally uses the two
benchmarks to identify the quantities that control the error of a one-surface description.

\section{From the molecular Schr\"odinger equation to nuclear descriptions}
\label{sec:hierarchy}

The question in this section is: starting with the exact molecular Schr\"odinger equation,
what calculation produces each of the nuclear descriptions used later?

\subsection{The exact electron--nuclear problem}

We begin with the standard nonrelativistic molecular Hamiltonian.  A general and accessible
account of the molecular quantum mechanics used here is given by Atkins and Friedman
\cite{AtkinsFriedman2011}.

Let $\rr=(\mathbf r_1,\ldots,\mathbf r_{N_{\mathrm e}})$ denote the electronic coordinates
and $\R=(\mathbf R_1,\ldots,\mathbf R_{N_{\mathrm n}})$ the nuclear coordinates.  Let
$\hat H$ denote the molecular Hamiltonian and $E$ a stationary molecular energy.  A
stationary molecular state is a wavefunction $\Psi(\rr,\R)$ satisfying
\begin{equation}
\hat H\Psi(\rr,\R)=E\Psi(\rr,\R).
\label{eq:fullSE}
\end{equation}
For each nucleus $\alpha$, let $M_\alpha$ be its mass and let $\nabla_\alpha$ denote
differentiation with respect to $\mathbf R_\alpha$.  In atomic units, let $\Tn$ denote
the nuclear kinetic-energy operator and $\Hel$ the remaining electronic Hamiltonian.  Then
\begin{equation}
\hat H=\Tn+\Hel,
\qquad
\Tn=-\sum_{\alpha=1}^{N_{\mathrm n}}
\frac{1}{2M_\alpha}\nabla_\alpha^2.
\label{eq:Hsplit}
\end{equation}
Let $V(\rr,\R)$ denote the potential energy of all Coulomb interactions.  The electronic
Hamiltonian is
\begin{equation}
\Hel(\rr;\R)=
-\sum_{i=1}^{N_{\mathrm e}}\frac12\nabla_{\mathbf r_i}^2+V(\rr,\R).
\label{eq:Hel}
\end{equation}
Let $Z_\alpha$ denote the charge of nucleus $\alpha$.  For Coulomb particles the potential
in Eq.~\eqref{eq:Hel} is
\begin{equation}
V(\rr,\R)=
\sum_{i<j}\frac{1}{|\mathbf r_i-\mathbf r_j|}
-\sum_{i,\alpha}\frac{Z_\alpha}{|\mathbf r_i-\mathbf R_\alpha|}
+\sum_{\alpha<\beta}\frac{Z_\alpha Z_\beta}{|\mathbf R_\alpha-\mathbf R_\beta|}.
\label{eq:coulombV}
\end{equation}
Nothing has yet been separated: the unknown in Eq.~\eqref{eq:fullSE} is one function of all
electronic and nuclear coordinates.  The spectral questions that arise for the Coulomb
Hamiltonian when one later introduces fixed-nuclei electronic states have been analyzed
carefully by Sutcliffe and Woolley \cite{SutcliffeWoolley2012,SutcliffeWoolley2014}.  Our
benchmark models avoid those complications.

\subsection{First construct the fixed-nuclei electronic states}

Choose a nuclear geometry $\R$ and hold it fixed.  The operator $\Hel(\rr;\R)$ then acts only
on the electronic variables.  Let $\phi_k(\rr;\R)$ denote its $k$th normalized electronic
eigenfunction and $E_k(\R)$ the corresponding eigenvalue.  They are defined by
\begin{equation}
\Hel(\rr;\R)\phi_k(\rr;\R)=E_k(\R)\phi_k(\rr;\R).
\label{eq:clamped}
\end{equation}
The label $k$ distinguishes electronic eigenstates.  Repeating this calculation as $\R$
changes gives an electronic eigenvalue $E_k(\R)$ and an electronic eigenfunction
$\phi_k(\rr;\R)$ at each nuclear geometry.  When a nuclear equation is later restricted to
this electronic state, the function $E_k(\R)$ supplies the potential-energy term in that
nuclear equation.  Constructing the fixed-nuclei electronic states is not yet the
Born--Oppenheimer approximation; it provides the electronic states from which that
approximation, and the Born--Huang expansion, are built \cite{AgostiniCurchod2026}.

At each fixed $\R$, the electronic eigenstates form a basis, or more generally a spectral
resolution, for functions of the electronic coordinates.  Let $\chi_k(\R)$ denote the
coefficient of $\phi_k$ when the molecular wavefunction is expanded in that basis.  Then
\begin{equation}
\Psi(\rr,\R)=\sum_k\chi_k(\R)\phi_k(\rr;\R).
\label{eq:BHsum}
\end{equation}
The functions $\chi_k(\R)$ are the expansion coefficients; they depend only on the nuclear
coordinates.  Equation~\eqref{eq:BHsum} is the Born--Huang expansion
\cite{BornHuang1954}.  If the electronic spectral resolution is complete, it is an exact
rewriting of the molecular state.  For a Coulomb electronic Hamiltonian, completeness may
require continuum contributions as well as discrete states
\cite{SutcliffeWoolley2012,SutcliffeWoolley2014}.  No continuum occurs in the finite
benchmark models below.

\subsection{Two exact descriptions of the same molecular state}
\label{sec:EFconstruction}

We now have enough notation to make the relation between the full Born--Huang expansion and
exact factorization explicit.  Sch\"urger, Lassmann, Agostini, and Curchod recently obtained
the exact-factorization electronic state directly from the Born--Huang coefficient vector:
the rank-one density matrix formed from those coefficients has one nonzero eigenvalue, and
its normalized eigenvector is proportional to the coefficient vector itself
\cite{SchurgerLassmannAgostiniCurchod2025}.  We use the equivalent elementary vector
language because it makes the same relation visible without first introducing a density
matrix.

At fixed $\R$, the molecular wavefunction is a function only of the electronic coordinates.
We use ket notation for this electronic vector and denote it by $\ket{\Psi_\R}$:
\begin{equation}
\ket{\Psi_\R}\equiv \Psi(\,\cdot\,,\R).
\label{eq:PsiRvector}
\end{equation}
Equation~\eqref{eq:BHsum} describes this vector by its components in the fixed-nuclei
electronic basis,
\begin{equation}
\ket{\Psi_\R}=\sum_k\chi_k(\R)\ket{\phi_k(\R)}.
\label{eq:introBH}
\end{equation}
Nothing has been approximated.  This is the familiar description of a vector by its
components in a chosen basis.

The same nonzero vector can instead be described by its length and a unit direction.  Its
electronic norm is
\begin{equation}
\|\Psi_\R\|^2=\int |\Psi(\rr,\R)|^2\,d\rr.
\label{eq:Psirnorm}
\end{equation}
Where this norm is nonzero, define
\begin{equation}
\chi(\R)=\|\Psi_\R\|,
\qquad
\Phi_\R(\rr)=\frac{\Psi(\rr,\R)}{\|\Psi_\R\|}.
\label{eq:normdirection}
\end{equation}
Then
\begin{equation}
\Psi(\rr,\R)=\chi(\R)\Phi_\R(\rr),
\qquad
\int |\Phi_\R(\rr)|^2\,d\rr=1.
\label{eq:EFansatz}
\end{equation}
The restriction to points where the marginal norm is nonzero is substantive: zeros of the
marginal introduce a state-dependent regularity question, analyzed by Jecko, Sutcliffe, and
Woolley and clarified in their corrigendum~\cite{JeckoSutcliffeWoolley2015,JSW2018corrigendum}.
This is the exact factorization used in the present paper.  The factor $|\chi(\R)|^2$ is the
nuclear probability density obtained after the electronic coordinates are integrated out;
$\Phi_\R$ is the corresponding normalized electronic factor at that nuclear geometry.  In
Hunter's terminology these are marginal and conditional amplitudes
\cite{Hunter1974,Hunter1975}.  Modern molecular formulations were developed by Cederbaum and
by Abedi, Maitra, and Gross
\cite{Cederbaum2013,AbediMaitraGross2010,AbediMaitraGross2012}.

The relation to the Born--Huang coefficients follows immediately.  Orthonormality of the
electronic basis gives
\begin{equation}
|\chi(\R)|^2=\sum_k|\chi_k(\R)|^2.
\label{eq:coeffnorm}
\end{equation}
Define
\begin{equation}
C_k(\R)=\frac{\chi_k(\R)}{\chi(\R)}.
\label{eq:Ckdef}
\end{equation}
Then
\begin{equation}
\Phi_\R=\sum_k C_k(\R)\phi_k,
\qquad
\sum_k|C_k|^2=1,
\qquad
\chi_k=\chi C_k.
\label{eq:BH_EF_relation}
\end{equation}
The Born--Huang coefficients therefore describe the vector relative to a chosen electronic
basis, whereas exact factorization separates the same vector into its length and normalized
direction:
\begin{equation}
\begin{aligned}
\text{full BH:}&\quad \text{components of the electronic vector in a chosen basis},\\
\text{EF:}&\quad \text{length and normalized direction of the same vector}.
\end{aligned}
\label{eq:componentsdirection}
\end{equation}
No novelty is claimed for the relation itself; the purpose of this form is explanatory.  The
phase of the normalized direction will matter only when we differentiate it, and is therefore
deferred until Section~\ref{sec:metric}.

\subsection{What the nuclear kinetic energy does to the Born--Huang expansion}

We now ask what changes when Eq.~\eqref{eq:BHsum} is substituted into the molecular
Schr\"odinger equation.  Both $\phi_k(\rr;\R)$ and $\chi_k(\R)$ depend on the nuclear
coordinates, so the nuclear kinetic-energy operator differentiates both factors.  If a single
product $\phi_k\chi$ is considered, the product rule gives
\begin{equation}
\Tn(\phi_k\chi)
=\phi_k\Tn\chi
-\sum_\alpha\frac{1}{M_\alpha}(\nabla_\alpha\phi_k)\!\cdot\!(\nabla_\alpha\chi)
-\chi\sum_\alpha\frac{1}{2M_\alpha}\nabla_\alpha^2\phi_k.
\label{eq:productrule}
\end{equation}
The last two terms exist because the electronic eigenfunction changes with nuclear geometry.
Every one of them carries an inverse nuclear mass, but mass alone does not determine whether
they are small: their size also depends on how rapidly $\phi_k$ changes with $\R$.

For the complete expansion, substitute Eq.~\eqref{eq:BHsum} into
Eq.~\eqref{eq:fullSE}, multiply the result by $\phi_j^*(\rr;\R)$, and integrate over the
electronic coordinates.  This gives coupled differential equations for the nuclear
coefficients $\chi_j(\R)$.  Their coupling coefficients contain electronic integrals of
$\nabla_\alpha\phi_k$ and $\nabla_\alpha^2\phi_k$.  The full coupled set is still an exact
representation of the same molecular problem.

\subsection{The two one-surface approximations}

Only now do we restrict the exact Born--Huang expansion.  Retain one fixed-nuclei electronic
state $\phi_k$.  If all terms in Eq.~\eqref{eq:productrule} that differentiate $\phi_k$ are
omitted, denote the resulting one-surface energy by $\Ecal$.  The nuclear equation is
\begin{equation}
[\Tn+E_k(\R)]\chi=\Ecal\chi.
\label{eq:BOeq}
\end{equation}
We call this the \emph{Born--Oppenheimer approximation}.  The electronic eigenvalue
$E_k(\R)$ now appears as the potential-energy term in the nuclear Schr\"odinger equation.

A less severe one-state approximation keeps the part of the derivative contribution that
acts within the retained electronic state while still discarding coupling to the other
electronic states.  For the real electronic eigenfunctions used in our one-dimensional
benchmarks, normalization gives $\bk{\phi_k}{\nabla_\alpha\phi_k}=0$.  Let $W_k(\R)$ denote
the retained scalar derivative term:
\begin{equation}
W_k(\R)=\sum_\alpha\frac{1}{2M_\alpha}
\bk{\nabla_\alpha\phi_k}{\nabla_\alpha\phi_k}.
\label{eq:Wdef}
\end{equation}
Denote the energy of this corrected one-surface equation by $E^A$, where the superscript $A$ labels the corrected adiabatic one-surface energy.  The nuclear equation is
\begin{equation}
[\Tn+E_k(\R)+W_k(\R)]\chi=E^A\chi.
\label{eq:BHeq}
\end{equation}
We call this the \emph{single-surface Born--Huang} approximation
\cite{Kutzelnigg1997,HandyYamaguchiSchaefer1986}.  Equation~\eqref{eq:Wdef} is the form needed
for the real one-dimensional benchmark eigenfunctions below.  Section~\ref{sec:metric}
returns to the derivative of a normalized electronic state in the general complex case.

Terminology is not uniform.  Some current literature includes the diagonal derivative
correction in what it calls the Born--Oppenheimer approximation and calls the uncorrected
equation ``adiabatic BO''; the corrected potentials are then called Born--Huang surfaces
\cite{AgostiniCurchod2026,PrljEtAl2026}.  We use the convention above because the paper
compares the uncorrected potential-energy function $E_k$ and the corrected function $E_k+W_k$ separately.

\begin{table}[H]
\centering
\small
\renewcommand{\arraystretch}{1.18}
\begin{tabular}{@{}>{\raggedright\arraybackslash}p{0.18\textwidth}
                >{\raggedright\arraybackslash}p{0.25\textwidth}
                >{\raggedright\arraybackslash}p{0.22\textwidth}
                >{\raggedright\arraybackslash}p{0.25\textwidth}@{}}
\toprule
Description & Mathematical form & Exact or approximate? & What is retained or discarded \\
\midrule
Full molecular problem & $\Psi(\rr,\R)$ & exact & no electron--nuclear separation has been made \\
Full BH & $\Psi=\sum_k\chi_k\phi_k$ & exact with complete spectral resolution & all electronic components and the coupled nuclear equations are retained \\
BO & one retained $\phi_k$, potential $E_k$ & approximate & all terms produced by nuclear derivatives of $\phi_k$ are discarded \\
Single-surface BH & one retained $\phi_k$, potential $E_k+W_k$ & approximate & the within-state derivative term is retained; coupling to other electronic states is discarded \\
EF & $\Psi=\chi\Phi_\R$ & exact for the specified molecular state & the full molecular state is written as one nuclear amplitude and one normalized electronic factor \\
\bottomrule
\end{tabular}
\caption{The descriptions used in the paper, stated in terms of the actual mathematical
operations.}
\label{tab:three}
\end{table}

\section{The Fern\'andez model: one Hamiltonian, four calculations}
\label{sec:fernandez}

The first benchmark asks what can be learned when the exact molecular problem is simple
enough that the exact, BO, single-surface BH, and EF descriptions can all be written down
explicitly.  Let $x$ be a unit-mass coordinate playing the role of an electronic coordinate,
let $X$ be a nuclear-like coordinate with mass $M$, and let $\beta$ be their bilinear
coupling.  We consider
\begin{equation}
\hat H=-\frac12\partial_x^2-\frac{1}{2M}\partial_X^2
      +\frac12(x^2+X^2)+\beta xX,
\qquad \beta^2<1.
\label{eq:fern}
\end{equation}
Fern\'andez used this model to test his mass-ratio expansion through sixth order
\cite{Fernandez1994}.  McCoy's bilinear oscillator becomes Eq.~\eqref{eq:fern} after setting
$\hbar=m=\omega_r=1$, $\omega_R=M^{-1/2}$, and $V_1=\beta$ \cite{McCoy2001}.  For the
physical fast--slow interpretation we take $M>1$, although the ground-state proposition
below needs only $M>0$.

We use the following energy notation throughout this section:
\begin{equation}
\begin{array}{ll}
E_{nv} & \text{exact total molecular energy},\\[2pt]
\Ecal_{nv} & \text{BO approximate molecular energy},\\[2pt]
E^A_{nv} & \text{single-surface BH approximate molecular energy}.
\end{array}
\label{eq:fernenergynotation}
\end{equation}
The meaning of the indices $n$ and $v$ follows from the exact normal modes below.

\subsection{Exact molecular energies}

Set $\widetilde X=\sqrt M\,X$.  In the mass-weighted coordinates $(x,\widetilde X)$ the
kinetic energy has unit mass in both directions, and an orthogonal rotation diagonalizes the
quadratic potential.  Denote the two squared normal-mode frequencies by $k_+$ and $k_-$:
\begin{equation}
k_\pm=\frac{(M+1)\pm\sqrt{(M-1)^2+4M\beta^2}}{2M}.
\label{eq:kpm}
\end{equation}
Let $n,v=0,1,2,\ldots$ denote the quantum numbers of the $+$ and $-$ normal modes,
respectively.  These labels are exact and do not require a large-mass approximation.  The
exact molecular energies are
\begin{equation}
E_{nv}=\left(n+\frac12\right)\sqrt{k_+}
       +\left(v+\frac12\right)\sqrt{k_-}.
\label{eq:exactspectrum}
\end{equation}
For $M>1$, the $+$ mode has the higher frequency.  In the large-$M$ limit,
\begin{equation}
\sqrt{k_+}\longrightarrow1,
\qquad
\sqrt{k_-}\sim\sqrt{\frac{1-\beta^2}{M}},
\label{eq:fastslowlimit}
\end{equation}
so the $+$ and $-$ modes acquire the natural interpretation as fast and slow motion.
We use the same label $n$ below for the fixed-$X$ electronic oscillator state because, in
this limit, the $+$ normal mode approaches that electronic motion; this gives the natural
correspondence between exact and one-surface levels used in the comparison.  Appendix
\ref{app:fernandez} gives the normal-mode calculation.

\subsection{Clamped electronic states and the BO calculation}

At fixed $X$,
\begin{equation}
\frac12x^2+\beta xX+\frac12X^2
=\frac12(x+\beta X)^2+\frac12(1-\beta^2)X^2.
\label{eq:complsq}
\end{equation}
Let $\varphi_n$ denote the normalized $n$th eigenfunction of the unit-frequency harmonic
oscillator.  The fixed-$X$ electronic eigenfunction is
\begin{equation}
\phi_n(x;X)=\varphi_n(x+\beta X),
\label{eq:fernphi}
\end{equation}
and the corresponding fixed-$X$ electronic eigenvalue is
\begin{equation}
U_n(X)=n+\frac12+\frac12(1-\beta^2)X^2.
\label{eq:Un}
\end{equation}
When nuclear motion on this electronic state is calculated, $U_n(X)$ is the
potential-energy function in the BO nuclear equation.  The separation between any two
fixed-$X$ electronic eigenvalues is
\begin{equation}
U_m(X)-U_n(X)=m-n,
\label{eq:ferngap}
\end{equation}
which is independent of $X$.

The nuclear motion on every $U_n$ is harmonic with frequency
\begin{equation}
\omega=\sqrt{\frac{1-\beta^2}{M}}.
\label{eq:omegafern}
\end{equation}
Let $\chi_v^{(\omega)}(X)$ be the normalized $v$th eigenfunction of this nuclear oscillator.
The BO molecular wavefunction and energy are therefore
\begin{align}
\Psi^{\rm BO}_{nv}(x,X)
&=\phi_n(x;X)\chi_v^{(\omega)}(X),
\label{eq:fernBOwf}\\
\Ecal_{nv}
&=n+\frac12+\left(v+\frac12\right)\omega.
\label{eq:BOfernenergy}
\end{align}

\subsection{What changes when the diagonal derivative term is kept}

From Eq.~\eqref{eq:fernphi},
\begin{equation}
\partial_X\phi_n=\beta\,\varphi_n'(x+\beta X).
\label{eq:fernphideriv}
\end{equation}
The normalization $\langle\varphi_n|\varphi_n\rangle=1$ does not imply that the derivative
has unit norm.  For the unit-frequency oscillator,
\begin{equation}
\langle\varphi_n'|\varphi_n'\rangle
=\langle\varphi_n|\hat p^2|\varphi_n\rangle
=n+\frac12,
\label{eq:oscderivnorm}
\end{equation}
where the last equality follows from the harmonic-oscillator virial theorem; Appendix
\ref{app:fernandez} gives the equivalent ladder-operator calculation.  Equation
\eqref{eq:Wdef} then gives
\begin{equation}
W_n=\frac{\beta^2}{2M}\left(n+\frac12\right).
\label{eq:Wn}
\end{equation}
The important simplification is that $W_n$ is independent of $X$.

The BO nuclear equation is
\begin{equation}
[\Tn+U_n(X)]\chi_v^{(\omega)}=\Ecal_{nv}\chi_v^{(\omega)},
\label{eq:fernBOeqexplicit}
\end{equation}
whereas the single-surface BH equation is
\begin{equation}
[\Tn+U_n(X)+W_n]\chi^{A}_{nv}=E^A_{nv}\chi^{A}_{nv}.
\label{eq:fernBHeqexplicit}
\end{equation}
Adding the constant $W_n$ shifts every eigenvalue by $W_n$ without changing the
eigenfunction.  Hence
\begin{equation}
\chi^{A}_{nv}=\chi_v^{(\omega)},
\qquad
E^A_{nv}=\Ecal_{nv}+W_n.
\label{eq:BHfernenergy}
\end{equation}
For this particular model the BO and single-surface BH molecular wavefunctions are therefore
identical,
\begin{equation}
\Psi^{\rm BH}_{nv}(x,X)=\Psi^{\rm BO}_{nv}(x,X),
\label{eq:fernBHwf}
\end{equation}
even though their energies differ.

\subsection{The same exact state in full BH and EF form}

The exact molecular eigenstate with normal-mode labels $(n,v)$ can also be expanded in the
fixed-$X$ electronic basis,
\begin{equation}
\Psi_{nv}(x,X)=\sum_j\chi_j^{(nv)}(X)\phi_j(x;X).
\label{eq:fernfullBH}
\end{equation}
This is the full Born--Huang description of the exact state.  Applying the construction of
Section~\ref{sec:EFconstruction} to these coefficients gives
\begin{equation}
\chi_{nv}^{\rm EF}(X)
=\left[\sum_j|\chi_j^{(nv)}(X)|^2\right]^{1/2},
\qquad
C_j^{(nv)}(X)=\frac{\chi_j^{(nv)}(X)}{\chi_{nv}^{\rm EF}(X)},
\label{eq:fernBHtoEFcoeffs}
\end{equation}
and
\begin{equation}
\Phi_{nv,X}(x)=\sum_j C_j^{(nv)}(X)\phi_j(x;X),
\qquad
\Psi_{nv}(x,X)=\chi_{nv}^{\rm EF}(X)\Phi_{nv,X}(x).
\label{eq:fernEFgeneral}
\end{equation}
The model therefore gives an explicit realization of the distinction introduced in Section
\ref{sec:EFconstruction}: full BH supplies the components of the exact electronic vector,
whereas EF supplies its length and normalized direction.

\subsection{The ground-state EF potential}

For the ground state the exact normal-mode wavefunction is
\begin{equation}
\Psi_{00}(z_+,z_-)
=\frac{(\omega_+\omega_-)^{1/4}}{\sqrt\pi}
\exp\!\left[-\frac12\left(\omega_+z_+^2+\omega_-z_-^2\right)\right],
\qquad \omega_\pm=\sqrt{k_\pm}.
\label{eq:fernnormalground}
\end{equation}
Write the normal-mode rotation so that
\begin{equation}
\sqrt M\,X=\sin\gamma\,z_++\cos\gamma\,z_-.
\label{eq:fernrotationX}
\end{equation}
The two normal coordinates are independent Gaussians with
$\langle z_\pm^2\rangle=(2\omega_\pm)^{-1}$.  Therefore
\begin{equation}
\langle X^2\rangle
=\frac{1}{2M}\left(\frac{\sin^2\gamma}{\omega_+}
                   +\frac{\cos^2\gamma}{\omega_-}\right).
\label{eq:fernXvariance}
\end{equation}
A marginal of a Gaussian is Gaussian, so the exact EF nuclear amplitude can be written
\begin{equation}
\chi_{00}^{\rm EF}(X)=\left(\frac{a}{\pi}\right)^{1/4}e^{-aX^2/2},
\qquad
a=\frac{1}{2\langle X^2\rangle}.
\label{eq:fernEFamp}
\end{equation}
Equivalently,
\begin{equation}
\frac1a=\frac1M\left(\frac{\sin^2\gamma}{\omega_+}
                    +\frac{\cos^2\gamma}{\omega_-}\right).
\label{eq:fern_a}
\end{equation}
Thus Eq.~\eqref{eq:fern_a} converts the exactly known normal-mode covariance into the width of
the exact nuclear marginal.  Hunter had already applied the marginal--conditional construction
to coupled harmonic oscillators~\cite{Hunter1974}; the formulas below specialize that exact
factorization to Fern\'andez's parameterization and place it beside the BO and single-surface
BH descriptions.

Before deriving the general EF nuclear equation in Section~\ref{sec:needEF}, define the
state-dependent potential associated with this nodeless marginal by requiring
\begin{equation}
\left[-\frac{1}{2M}\frac{d^2}{dX^2}+\varepsilon_{00}(X)\right]
\chi_{00}^{\rm EF}(X)=E_{00}\chi_{00}^{\rm EF}(X).
\label{eq:fernEFnucleareq}
\end{equation}
Equivalently,
\begin{equation}
\varepsilon_{00}(X)=E_{00}+\frac{1}{2M}
\frac{(\chi_{00}^{\rm EF})''(X)}{\chi_{00}^{\rm EF}(X)}.
\label{eq:fernEFinversion}
\end{equation}
Section~\ref{sec:needEF} derives Eq.~\eqref{eq:fernEFnucleareq} from exact factorization and
thereby identifies this definition with the exact EF scalar potential for the real, nonnegative
nuclear factor fixed in Eq.~\eqref{eq:normdirection}.
For Eq.~\eqref{eq:fernEFamp},
\begin{equation}
(\chi_{00}^{\rm EF})''=(a^2X^2-a)\chi_{00}^{\rm EF}.
\label{eq:fernEFsecondderiv}
\end{equation}
Defining $\omega_{\rm EF}=a/M$ therefore gives
\begin{equation}
\boxed{
\varepsilon_{00}(X)
=E_{00}-\frac{\omega_{\rm EF}}2
 +\frac{M\omega_{\rm EF}^2}{2}X^2.}
\label{eq:fernEF}
\end{equation}
The exact EF nuclear equation for this molecular ground state is therefore itself a harmonic
oscillator equation, with a frequency determined by the exact nuclear marginal.

The same Gaussian also permits a direct comparison between the exact EF electronic direction
and the clamped electronic ground state.  In Hunter's terminology this is the conditional
amplitude associated with the marginal just constructed~\cite{Hunter1974,Hunter1975}.  Rewrite Eq.~\eqref{eq:fernnormalground} in the
coordinates $(x,X)$ and let
\begin{equation}
A_{xx}=\omega_+\cos^2\gamma+\omega_-\sin^2\gamma
\label{eq:fernAxx}
\end{equation}
denote the coefficient multiplying $x^2$ in the Gaussian exponent.  At each fixed $X$, divide
the molecular Gaussian by the marginal amplitude in Eq.~\eqref{eq:fernEFamp} and normalize
the resulting electronic function.  Differentiating that normalized conditional function
with respect to $X$ gives
\begin{equation}
\bk{\partial_X\Phi_{00,X}}{\partial_X\Phi_{00,X}}
=\frac{M(\omega_+-\omega_-)^2\sin^2\gamma\cos^2\gamma}{2A_{xx}},
\label{eq:fernEFmetricclosed}
\end{equation}
which is independent of $X$.  This is the same type of derivative norm that appeared in
the single-surface BH correction; Section~\ref{sec:metric} will show why, for a real
normalized electronic state, it is also the quantum-metric coefficient.  At $X=0$ the same
conditional Gaussian gives
\begin{equation}
\bk{\Phi_{00,0}}{\Hel(0)\Phi_{00,0}}
=\frac{A_{xx}}{4}+\frac{1}{4A_{xx}},
\label{eq:fernEFHelclosed}
\end{equation}
whereas the clamped electronic ground-state eigenvalue is $U_0(0)=1/2$.  For
$M=10$ and $\beta=0.5$, Eqs.~\eqref{eq:fernEFmetricclosed}--\eqref{eq:fernEFHelclosed}
give $\langle\partial_X\Phi_{00,X}|\partial_X\Phi_{00,X}\rangle=0.0764$, compared with
$\langle\partial_X\phi_0|\partial_X\phi_0\rangle=\beta^2/2=0.125$, and
$\langle\Phi_{00,0}|\Hel(0)|\Phi_{00,0}\rangle=0.500015$.  Thus the exact EF direction is
not the clamped electronic eigenstate at finite nuclear mass.  The ratio of the two
derivative norms approaches one as $M\to\infty$.

\begin{table}[H]
\centering
\small
\renewcommand{\arraystretch}{1.15}
\begin{tabular}{@{}p{0.23\textwidth}p{0.30\textwidth}p{0.30\textwidth}@{}}
\toprule
Calculation & Potential used in the nuclear equation & Energy \\
\midrule
BO & $U_n(X)$ & $\Ecal_{nv}$ \\
Single-surface BH & $U_n(X)+W_n$ & $E^A_{nv}$ \\
EF ground state & state-dependent $\varepsilon_{00}(X)$ & exact $E_{00}$ \\
\bottomrule
\end{tabular}
\caption{Three nuclear equations obtained from the same Fern\'andez Hamiltonian.  The full
BH expansion is not a fourth approximation; when it is not truncated, it reproduces the
exact molecular spectrum.}
\label{tab:ferncompare}
\end{table}

\subsection{Ground-state energy ordering and excited-state counterexamples}

Fern\'andez explicitly stated, without restricting the statement to the vibrational ground state, that the Born--Oppenheimer
and corrected adiabatic energies bracket the exact energy of the system, credited the result to
Brattsev, and then verified the strict inequality for the ground state of this bilinear oscillator
~\cite{Fernandez1994,Brattsev1965}.  The excited-state rows below show why the ground-state
qualifier is necessary for this model.  His warning about
terminology is important: Brattsev called the lower quantity ``adiabatic,'' whereas Fern\'andez
and the present paper call it the Born--Oppenheimer energy; Brattsev's ``variational'' energy is
the corrected adiabatic, or single-surface BH, energy used here.  Epstein supplied a corrected
proof of the lower-bound statement~\cite{Epstein1966}, and McCoy later rederived that inequality
and checked it for the bilinear oscillator~\cite{McCoy2001}.  The bracket itself is therefore
not new.  What the closed-form Fern\'andez formulas allow us to add is the pair of exact
difference identities below, which prove strictness for every $M>0$ and $0<|\beta|<1$, rather
than only within the $M>1$ convergence range of the mass-ratio expansion.

\begin{proposition}[Strict two-sided ground-state energy ordering in the Fern\'andez model]
\label{prop:brattsev}
For $M>0$ and $0<|\beta|<1$,
\begin{equation}
\Ecal_{00}<E_{00}<E^A_{00}.
\label{eq:brattsevorder}
\end{equation}
\end{proposition}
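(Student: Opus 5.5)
The plan is to reduce both inequalities to elementary algebra on the closed-form energies already obtained. By Eq.~\eqref{eq:exactspectrum}, $E_{00}=\tfrac12 s$ with $s=\sqrt{k_+}+\sqrt{k_-}$. By Eq.~\eqref{eq:BOfernenergy}, $\Ecal_{00}=\tfrac12(1+\omega)$ with $\omega=\sqrt{(1-\beta^2)/M}$. By Eqs.~\eqref{eq:Wn} and \eqref{eq:BHfernenergy}, $E^A_{00}=\Ecal_{00}+\beta^2/(4M)$. The normal-mode frequencies should be handled only through the symmetric functions of Eq.~\eqref{eq:kpm}, namely
\begin{equation*}
k_++k_-=1+\frac1M,\qquad k_+k_-=\frac{1-\beta^2}{M}=\omega^2 .
\end{equation*}
Under $M>0$ and $|\beta|<1$ both $k_\pm$ are positive, so $s>0$ and all square roots are real.

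For the lower bound, I square $s$ and obtain $s^2=1+1/M+2\omega$. Comparing with $(1+\omega)^2=1+2\omega+\omega^2$ gives the exact identity
\begin{equation*}
s^2-(1+\omega)^2=\frac1M-\omega^2=\frac{\beta^2}{M}.
\end{equation*}
Factoring the left side as $(s-1-\omega)(s+1-\omega)$ would be awkward, so instead I write it as $(s-(1+\omega))(s+1+\omega)$. This yields the closed-form difference
\begin{equation*}
E_{00}-\Ecal_{00}=\frac{\beta^2}{2M\,(s+1+\omega)},
\end{equation*}
which is strictly positive for every $M>0$ and $\beta\neq0$.

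For the upper bound, I subtract this identity from $E^A_{00}-\Ecal_{00}=\beta^2/(4M)$:
\begin{equation*}
E^A_{00}-E_{00}=\frac{\beta^2}{4M}\left(1-\frac{2}{s+1+\omega}\right)=\frac{\beta^2\,(s+\omega-1)}{4M\,(s+1+\omega)}.
\end{equation*}
Strictness then reduces to showing $s+\omega>1$. This is the only step needing an extra observation, and the lower-bound step already supplies it: $s>1+\omega\ge1$, so $s+\omega>1$ whenever $\beta\neq0$. Here $s>1+\omega$ follows from $s^2>(1+\omega)^2$ with both sides positive.

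The main obstacle I anticipate is purely presentational: avoiding any large-$M$ expansion so the result holds for all $M>0$, including $M<1$ where the $\pm$ labels no longer mean fast and slow. Working exclusively with $k_++k_-$ and $k_+k_-$ achieves this, since neither the ordering $k_+>k_-$ nor the asymptotic limit \eqref{eq:fastslowlimit} is used. A brief sanity check that both differences vanish at $\beta=0$, and that they are $O(\beta^2/M)$ and $O(\beta^2\omega/M)$ respectively for large $M$, would confirm consistency with Fern\'andez's series.
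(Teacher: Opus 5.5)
Your proof is correct and follows the paper's route: both rest on the trace and determinant identities $k_++k_-=1+M^{-1}$ and $k_+k_-=\omega^2$, which give $(2E_{00})^2-(2\Ecal_{00})^2=\beta^2/M$. The only difference is in the upper bound. The paper squares directly and obtains the manifestly positive $(2E^A_{00})^2-(2E_{00})^2=\beta^2\omega/M+\beta^4/(4M^2)$, whereas you subtract your rationalized lower-bound difference from $\beta^2/(4M)$ and so need the extra, valid observation $s+\omega>1$.
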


\begin{proof}
From Eqs.~\eqref{eq:kpm}--\eqref{eq:BHfernenergy},
\[
E_{00}=\frac12(\sqrt{k_+}+\sqrt{k_-}),\qquad
\Ecal_{00}=\frac12(1+\omega),\qquad
E^A_{00}=\Ecal_{00}+\frac{\beta^2}{4M}.
\]
The trace and determinant of the normal-mode stiffness matrix give
$k_++k_-=1+M^{-1}$ and $k_+k_-=(1-\beta^2)/M=\omega^2$.  Hence
\begin{align}
(2E_{00})^2-(2\Ecal_{00})^2&=\frac{\beta^2}{M}>0,\\
(2E^A_{00})^2-(2E_{00})^2
&=\frac{\beta^2\omega}{M}+\frac{\beta^4}{4M^2}>0.
\end{align}
All three energies are positive, proving Eq.~\eqref{eq:brattsevorder}.

The upper inequality also follows from the Rayleigh--Ritz variational principle.  For a
self-adjoint Hamiltonian with ground energy $E_0$,
\begin{equation}
E_0=\inf_{\|\Psi\|=1}\langle\Psi|\hat H|\Psi\rangle.
\label{eq:rayleighritz}
\end{equation}
Restricting the minimization to normalized products
$\Psi(x,X)=\phi_0(x;X)\chi(X)$ can only increase the minimum.  The restricted expectation
value is exactly the single-surface BH functional, so its minimum is $E^A_{00}$ and
$E_{00}\le E^A_{00}$.  This is the standard variational principle; an accessible account is
given in Ref.~\cite{AtkinsFriedman2011}.
\end{proof}

This is a ground-state statement.  At $M=10$ and $\beta=0.1$ the exact formulas give
\begin{center}
\small
\begin{tabular}{c|ccc|c}
$(n,v)$ & $\Ecal_{nv}$ & $E_{nv}$ & $E^A_{nv}$ & ordering\\ \hline
$(0,0)$ & $0.657321$ & $0.657511$ & $0.657571$ & $\Ecal<E<E^A$\\
$(0,1)$ & $0.971964$ & $0.971980$ & $0.972214$ & $\Ecal<E<E^A$\\
$(0,2)$ & $1.286607$ & $1.286448$ & $1.286857$ & $E<\Ecal<E^A$\\
$(2,0)$ & $2.657321$ & $2.658621$ & $2.658571$ & $\Ecal<E^A<E$\\
\end{tabular}
\end{center}
The $(0,2)$ row shows that the positive diagonal correction can move an excited energy
farther from the exact value.  The $(2,0)$ row shows that even the corrected one-surface
energy can lie below the exact excited energy.  Ordinary Rayleigh--Ritz minimization is a
ground-state statement; excited-state variational results require additional orthogonality or
minimax conditions.  Section~\ref{sec:UQ} will show explicitly where the dependence on $v$
enters the Fern\'andez expansion.

Fern\'andez expands the exact energies in the large-mass variable $\eta=M^{-1/4}$.  A Taylor
series about a point converges up to the nearest singularity of the analytically continued
function in the complex plane \cite{BenderOrszag1999}.  In the present model the singularities
nearest $\eta=0$ come from the zeros of the square-root discriminant,
\begin{equation}
M_\pm=(1-2\beta^2)\pm2i\beta\sqrt{1-\beta^2},
\qquad |M_\pm|=1.
\label{eq:branchpoints}
\end{equation}
Since $\eta=M^{-1/4}$, these branch points map to $|\eta|=1$.  The expansion about $\eta=0$
therefore has radius of convergence one and converges for positive real $M>1$
\cite{Fernandez1994}.  The stability condition is equally direct: at $|\beta|=1$ the slow
force constant vanishes, and for $|\beta|>1$ the quadratic potential is indefinite.
Figure~\ref{fig:fern} summarizes these properties.

\begin{figure}[p]
\centering
\begin{minipage}{0.48\textwidth}
\centering
\includegraphics[width=\linewidth]{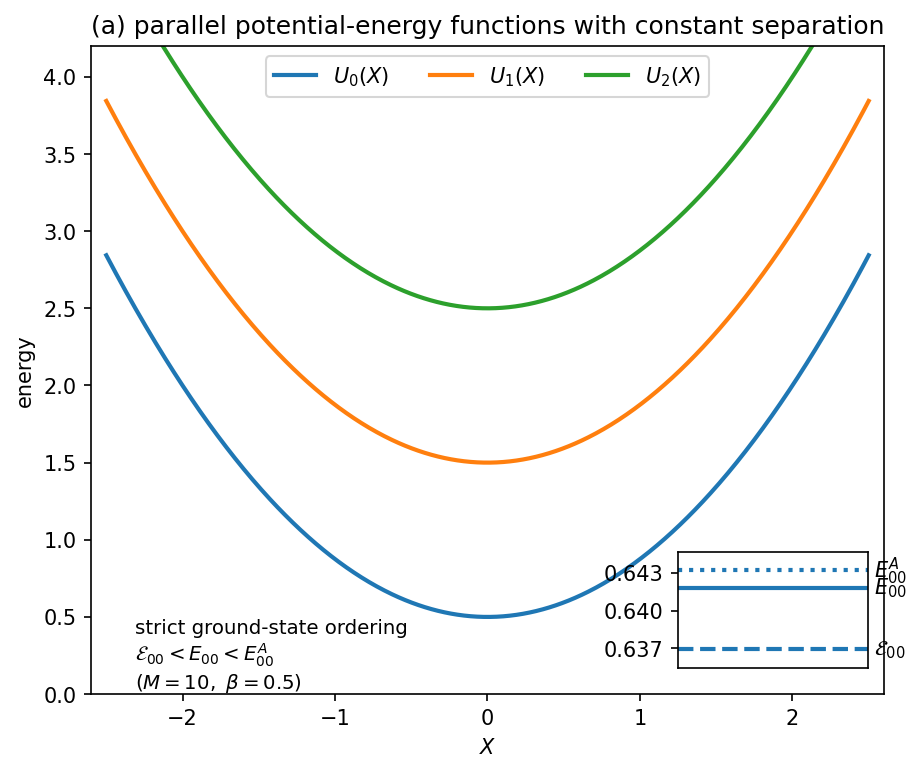}
\end{minipage}\hfill
\begin{minipage}{0.48\textwidth}
\centering
\includegraphics[width=\linewidth]{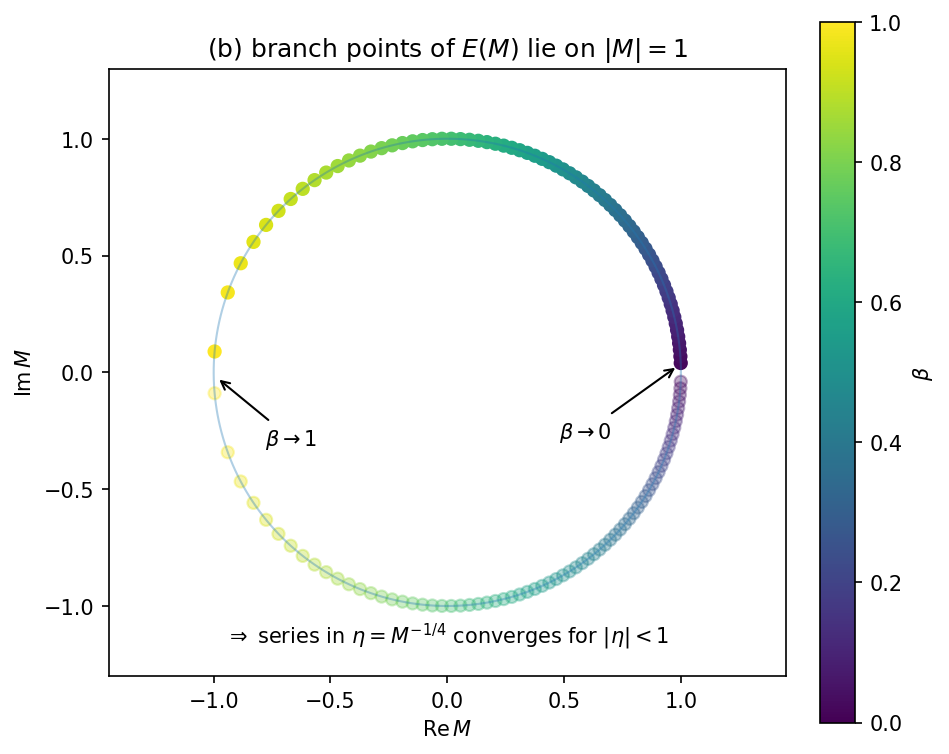}
\end{minipage}

\vspace{0.7em}
\includegraphics[width=0.52\textwidth]{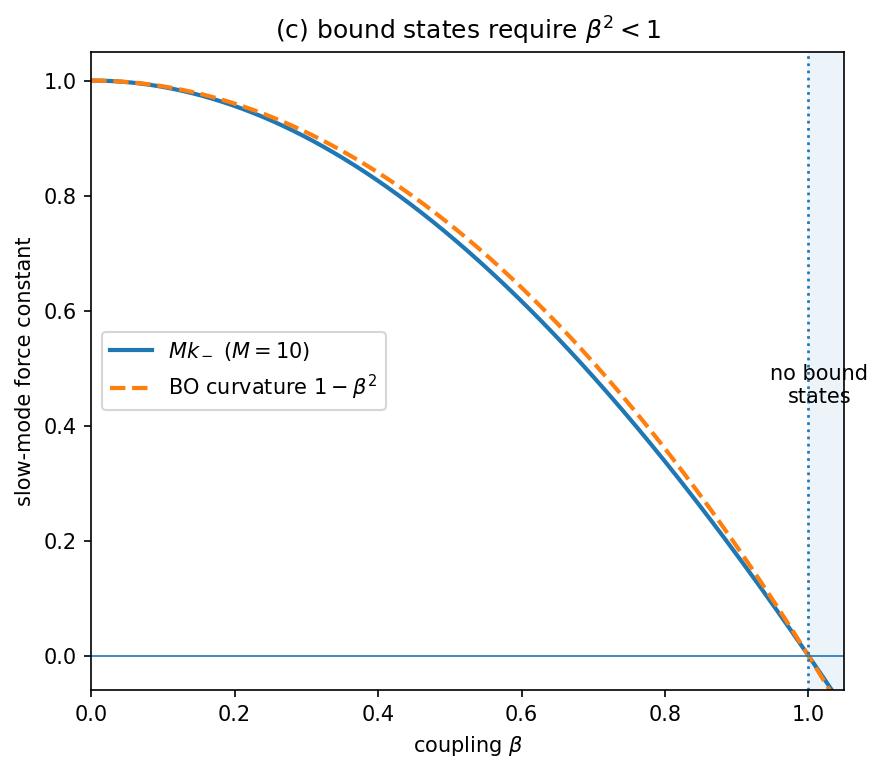}
\caption{Exact structural features of the Fern\'andez model.  \textbf{(a)} The BO
potential-energy functions are parallel parabolas with constant separation; the inset
compares exact, BO, and single-surface BH ground energies at $M=10$, $\beta=0.5$.
\textbf{(b)} The complex branch points controlling the mass-ratio expansion lie on
$|M|=1$.  \textbf{(c)} The slow-mode force constant vanishes at the stability threshold
$|\beta|=1$.}
\label{fig:fern}
\end{figure}
\FloatBarrier

The model therefore serves several purposes: it supplies exact molecular energies,
it makes the BO and BH nuclear equations explicit, and it gives an exact EF nuclear
potential for the ground state.  It also has a nonzero derivative of the electronic state.
What it cannot test is the effect of an electronic energy separation that changes with
nuclear position.  We first examine the derivative itself, and then introduce a second model
for that remaining question.

\section{From the derivative of a normalized electronic state to quantum geometry}
\label{sec:metric}

The derivative of an electronic eigenstate has already appeared in Sections
\ref{sec:hierarchy} and \ref{sec:fernandez}.  We now ask what part of that derivative describes
an actual change of the normalized electronic state and what part can be changed simply by
changing its phase.

\subsection{Separate phase change from change of electronic direction}

For one nuclear coordinate $R$, let $\ket{\phi(R)}$ be normalized and let $\theta(R)$ be any
real function.  The vectors
\begin{equation}
\ket{\widetilde\phi(R)}=e^{i\theta(R)}\ket{\phi(R)}
\label{eq:gaugetransform}
\end{equation}
represent the same physical electronic state.  To see which part of the derivative depends
on this arbitrary phase convention, differentiate Eq.~\eqref{eq:gaugetransform} with respect
to $R$:
\begin{equation}
\ket{\partial_R\widetilde\phi}
=e^{i\theta}\left(\ket{\partial_R\phi}+i\theta'\ket{\phi}\right).
\label{eq:phasederivative}
\end{equation}
Changing the phase therefore adds a term parallel to $\ket\phi$.

Let
\begin{equation}
P=\ket\phi\bra\phi.
\label{eq:projectorP}
\end{equation}
For any electronic vector $|v\rangle$, $P|v\rangle$ is its component parallel to $|\phi\rangle$,
whereas $(1-P)|v\rangle$ is the perpendicular component.  Applying this decomposition to the
derivative gives
\begin{equation}
\ket{\partial_R\phi}
=P\ket{\partial_R\phi}+(1-P)\ket{\partial_R\phi}.
\label{eq:parperp}
\end{equation}
Because $|\widetilde\phi\rangle\langle\widetilde\phi|=P$, Eqs.~\eqref{eq:phasederivative}
and \eqref{eq:parperp} give
\begin{align}
P\ket{\partial_R\widetilde\phi}
&=e^{i\theta}\left[P\ket{\partial_R\phi}+i\theta'\ket\phi\right],
\label{eq:paralleltransform}\\
(1-P)\ket{\partial_R\widetilde\phi}
&=e^{i\theta}(1-P)\ket{\partial_R\phi}.
\label{eq:perptransform}
\end{align}
The parallel part can therefore be changed by the arbitrary function $\theta'(R)$.  The
perpendicular part changes only by the same overall phase as the state itself, so its squared
norm is unchanged.

The coefficient of the parallel part is obtained directly from the projector,
\begin{equation}
P\ket{\partial_R\phi}=\ket\phi\bk{\phi}{\partial_R\phi}.
\label{eq:parallelcoeff}
\end{equation}
Differentiating the normalization condition $\langle\phi|\phi\rangle=1$ gives
\begin{equation}
\bk{\partial_R\phi}{\phi}+\bk{\phi}{\partial_R\phi}=0.
\label{eq:normderivative}
\end{equation}
The two terms are complex conjugates, so $\bk{\phi}{\partial_R\phi}$ is purely imaginary.
It is therefore convenient to define the real quantity
\begin{equation}
A(R)=-i\bk{\phi}{\partial_R\phi}.
\label{eq:berryconnection}
\end{equation}
Equation~\eqref{eq:phasederivative} then gives
\begin{equation}
\widetilde A(R)=A(R)+\theta'(R).
\label{eq:Atransform}
\end{equation}
The quantity $A$ is usually called the Berry connection \cite{Berry1984}.  The name is less
important here than the calculation: $A$ records the part of the derivative that depends on
how the phase of the normalized vector has been chosen as $R$ changes.

The perpendicular part is independent of that choice in the sense of
Eq.~\eqref{eq:perptransform}.  Define its squared norm by
\begin{equation}
g(R)=\bk{\partial_R\phi}{(1-P)\partial_R\phi}.
\label{eq:metric1d}
\end{equation}
We next show why this quantity measures distance between nearby physical electronic states.

\subsection{Nearby normalized states and the quantum metric}

Consider the normalized electronic states at $R$ and $R+dR$.  Their inner product
\begin{equation}
\bk{\phi(R)}{\phi(R+dR)}
\label{eq:overlapdef}
\end{equation}
is commonly called their overlap.  Its magnitude is one if the two normalized vectors differ
only by a phase.  We therefore ask how its magnitude changes when the physical electronic
state itself changes.

Expand
\begin{equation}
\ket{\phi(R+dR)}
=\ket\phi+dR\ket{\phi'}+\frac12dR^2\ket{\phi''}+O(dR^3),
\label{eq:phitaylor}
\end{equation}
where primes denote $R$ derivatives.  Differentiating
$\langle\phi|\phi\rangle=1$ once and twice gives the relations needed to simplify the square
of the overlap.  A short calculation yields
\begin{equation}
|\bk{\phi(R)}{\phi(R+dR)}|^2
=1-\left[\bk{\phi'}{\phi'}-|\bk{\phi}{\phi'}|^2\right]dR^2+O(dR^3).
\label{eq:overlapexpanded}
\end{equation}
The bracket is exactly Eq.~\eqref{eq:metric1d}, so
\begin{equation}
|\bk{\phi(R)}{\phi(R+dR)}|^2
=1-g(R)dR^2+O(dR^3).
\label{eq:overlap1d}
\end{equation}
Thus
\begin{equation}
ds^2=g(R)dR^2
\label{eq:FSlineelement}
\end{equation}
is the infinitesimal squared distance along the family of normalized electronic states.  The
coefficient $g(R)$ is the one-dimensional Fubini--Study, or quantum, metric coefficient
\cite{ProvostVallee1980}.  Recent molecular work has used electronic-state overlaps to
extract the same geometric information over nuclear configuration space
\cite{XieLiuGu2025}.

\subsection{What determines how rapidly a clamped electronic eigenstate changes?}

We have found that $g(R)$ measures how rapidly the normalized electronic state changes with
nuclear position.  We now ask what determines that rate of change when the state is a
clamped electronic eigenstate.  Start from
\begin{equation}
\Hel(R)\ket{\phi_k(R)}=E_k(R)\ket{\phi_k(R)}
\label{eq:clampedket}
\end{equation}
and differentiate with respect to $R$:
\begin{equation}
(\partial_R\Hel)\ket{\phi_k}
+\Hel\ket{\partial_R\phi_k}
=E_k'\ket{\phi_k}+E_k\ket{\partial_R\phi_k}.
\label{eq:diffclamped}
\end{equation}
Multiply on the left by $\bra{\phi_j}$ with $j\ne k$.  Using orthonormality and
$\bra{\phi_j}\Hel=E_j\bra{\phi_j}$ gives
\begin{equation}
(E_k-E_j)\bk{\phi_j}{\partial_R\phi_k}
=\bk{\phi_j}{(\partial_R\Hel)\phi_k}.
\label{eq:HFstep}
\end{equation}
Define the derivative coupling
\begin{equation}
d_{jk}(R)=\bk{\phi_j}{\partial_R\phi_k}.
\label{eq:djkdef}
\end{equation}
Then
\begin{equation}
d_{jk}(R)=
\frac{\bk{\phi_j}{(\partial_R\Hel)\phi_k}}{E_k(R)-E_j(R)}.
\label{eq:HF}
\end{equation}
This is the off-diagonal Hellmann--Feynman relation
\cite{Hellmann1937,Feynman1939}.  It shows explicitly that, when the numerator is nonzero, a
smaller separation between electronic eigenvalues can make the change of the eigenstate with
$R$ larger.

The metric coefficient follows from completeness.  Let
$P_k=|\phi_k\rangle\langle\phi_k|$.  Inserting
$1=\sum_j|\phi_j\rangle\langle\phi_j|$ into Eq.~\eqref{eq:metric1d} gives
\begin{equation}
g_k(R)=\sum_{j\ne k}|d_{jk}(R)|^2.
\label{eq:metricsum}
\end{equation}
Combining Eqs.~\eqref{eq:HF} and \eqref{eq:metricsum} gives
\begin{equation}
g_k(R)=\sum_{j\ne k}
\frac{|\bk{\phi_j}{(\partial_R\Hel)\phi_k}|^2}
{[E_j(R)-E_k(R)]^2}.
\label{eq:metricspectral1d}
\end{equation}
This is the form we will use to understand why the electronic energy separation matters.

\subsection{The diagonal BH correction is the metric term}

We now return to the operation that produced the diagonal BH correction.  In one nuclear
coordinate,
\begin{equation}
-\frac{1}{2M}\partial_R^2(\phi\chi)
=-\frac{1}{2M}\left(\phi\chi''+2\phi'\chi'+\phi''\chi\right).
\label{eq:kineticproduct1d}
\end{equation}
Multiply by $\langle\phi|$ and integrate over the electronic coordinates:
\begin{equation}
-\frac{1}{2M}\left[
\chi''+2\bk{\phi}{\phi'}\chi'+\bk{\phi}{\phi''}\chi
\right].
\label{eq:projectedkinetic1d}
\end{equation}
Using $A=-i\langle\phi|\phi'\rangle$ and differentiating
$\langle\phi|\phi'\rangle$ gives
\begin{equation}
\bk{\phi}{\phi''}=iA'-\bk{\phi'}{\phi'}.
\label{eq:phisecondidentity}
\end{equation}
Together with
\begin{equation}
g=\bk{\phi'}{\phi'}-|\bk{\phi}{\phi'}|^2
=\bk{\phi'}{\phi'}-A^2,
\label{eq:gexpanded}
\end{equation}
we obtain
\begin{equation}
\left\langle\phi\left|-
\frac{1}{2M}\partial_R^2\right|\phi\chi\right\rangle
=
\left[\frac{1}{2M}(-i\partial_R+A)^2+\frac{g}{2M}\right]\chi.
\label{eq:projectedkineticgeometry}
\end{equation}
The scalar term produced by the change of the electronic state is therefore
\begin{equation}
\boxed{W_k(R)=\frac{g_k(R)}{2M}.}
\label{eq:WisG}
\end{equation}
For the real electronic eigenfunctions used in Section~\ref{sec:fernandez},
$\langle\phi_k|\partial_R\phi_k\rangle=0$ and hence $A=0$, so Eq.~\eqref{eq:WisG}
reduces exactly to the derivative norm used there.

For several Cartesian nuclear coordinates $R^\mu$, define
\begin{equation}
g^k_{\mu\nu}(\R)
=\operatorname{Re}\bk{\partial_\mu\phi_k}
{(1-P_k)\partial_\nu\phi_k}.
\label{eq:metricmulti}
\end{equation}
The nuclear kinetic-energy operator contains only the diagonal second derivatives in these
coordinates, so its contraction with the metric retains the diagonal components:
\begin{equation}
W_k(\R)=\sum_\mu\frac{g^k_{\mu\mu}(\R)}{2M_\mu}.
\label{eq:Wmetricmulti}
\end{equation}
This is why the one-dimensional result in Eq.~\eqref{eq:WisG} generalizes to a single sum,
even though the metric itself has two coordinate indices.  The relation between this metric
and the diagonal Born--Oppenheimer/Born--Huang correction has been emphasized in recent
molecular quantum-geometry work
\cite{KronenbergerSteinmetzAppenzellerPausch2025,XieLiuGu2025}.

For the Fern\'andez oscillator, Appendix~\ref{app:fernandez} gives
\begin{equation}
g_n=\beta^2\left(n+\frac12\right),
\label{eq:ferngmetric}
\end{equation}
which is independent of $X$.  The electronic state changes with $X$, but the metric
coefficient does not.  Section~\ref{sec:vibronic} introduces a second model in which the
electronic energy separation changes with nuclear position.

\section{What nuclear equation follows from exact factorization?}
\label{sec:needEF}

Section~\ref{sec:EFconstruction} showed that the exact molecular wavefunction can be written
as
\begin{equation}
\Psi(\rr,R)=\chi(R)\Phi_R(\rr),
\qquad \langle\Phi_R|\Phi_R\rangle=1.
\label{eq:EFrecalled}
\end{equation}
Section~\ref{sec:metric} then showed how the derivative of any normalized electronic vector
can be separated into a phase-dependent parallel part and a perpendicular part that measures
change of the physical electronic state.  We now ask a different question: \emph{what equation
does the nuclear factor $\chi(R)$ satisfy when the exact factorization is substituted into the
exact molecular Schr\"odinger equation?}  The answer identifies the exact nuclear potential
supplied by EF and makes its relation to the single-surface BH derivative term explicit.

Equation~\eqref{eq:normdirection} fixed the convenient representative
$\chi(R)=\|\Psi_R\|$, real and nonnegative.  More generally, the product in
Eq.~\eqref{eq:EFrecalled} allows an $R$-dependent phase to be redistributed between its two
factors.  For any real $\theta(R)$,
\begin{equation}
\widetilde\Phi_R=e^{i\theta(R)}\Phi_R,
\qquad
\widetilde\chi=e^{-i\theta(R)}\chi
\label{eq:EFgaugeearly}
\end{equation}
gives exactly the same molecular wavefunction.  This freedom matters because the nuclear
kinetic-energy operator differentiates both factors: moving an $R$-dependent phase from one
factor to the other changes their derivatives even though their product is unchanged.

Section~\ref{sec:metric} defined, for any normalized electronic vector, the real coefficient
$A=-i\langle\phi|\partial_R\phi\rangle$.  Applying the same construction to the normalized EF
electronic factor gives
\begin{equation}
A(R)=-i\bk{\Phi_R}{\partial_R\Phi_R}.
\label{eq:EFconnectionearly}
\end{equation}
Under Eq.~\eqref{eq:EFgaugeearly}, the calculation already carried out in Section
\ref{sec:metric} gives $\widetilde A=A+\theta'$.  The combination that will occur in the
nuclear equation transforms as
\begin{equation}
(-i\partial_R+\widetilde A)\widetilde\chi
=e^{-i\theta}(-i\partial_R+A)\chi.
\label{eq:covariantcombo}
\end{equation}
Thus the separate phase choices change, but the differential expression that enters the
nuclear equation changes only by the same overall phase as $\chi$.

To obtain that nuclear equation, substitute Eq.~\eqref{eq:EFrecalled} into the exact
molecular Schr\"odinger equation.  We first carry out the calculation for one nuclear
coordinate because every term can then be seen explicitly:
\begin{equation}
\left[-\frac{1}{2M}\partial_R^2+\Hel(R)\right]\chi(R)\Phi_R
=E\chi(R)\Phi_R.
\label{eq:EFsubstitutionstart}
\end{equation}
The product rule gives
\begin{equation}
-\frac{1}{2M}\left[\Phi_R\chi''+2\Phi_R'\chi'+\chi\Phi_R''\right]
+\chi\Hel\Phi_R
=E\chi\Phi_R.
\label{eq:EFproductexpanded}
\end{equation}
Multiply by $\langle\Phi_R|$ and integrate over the electronic coordinates.  Using
$\langle\Phi_R|\Phi_R\rangle=1$ gives
\begin{equation}
-\frac{1}{2M}\left[
\chi''+2\bk{\Phi_R}{\Phi_R'}\chi'
+\bk{\Phi_R}{\Phi_R''}\chi
\right]
+\bk{\Phi_R}{\Hel\Phi_R}\chi
=E\chi.
\label{eq:EFprojectedraw}
\end{equation}
From Eq.~\eqref{eq:EFconnectionearly},
$\langle\Phi_R|\Phi_R'\rangle=iA$.  Differentiating this inner product gives
\begin{equation}
\bk{\Phi_R}{\Phi_R''}=iA'-\bk{\Phi_R'}{\Phi_R'}.
\label{eq:EFsecondidentity}
\end{equation}
Substituting into Eq.~\eqref{eq:EFprojectedraw} and completing the square yields
\begin{equation}
\left[
\frac{1}{2M}(-i\partial_R+A)^2
+\bk{\Phi_R}{\Hel\Phi_R}
+\frac{1}{2M}\left(
\bk{\Phi_R'}{\Phi_R'}-A^2
\right)
\right]\chi=E\chi.
\label{eq:EFoneDintermediate}
\end{equation}
The last bracket is the same perpendicular-derivative quantity constructed in Section
\ref{sec:metric}.  Define
\begin{equation}
g^{\Phi}(R)
=\bk{\Phi_R'}{(1-|\Phi_R\rangle\langle\Phi_R|)\Phi_R'}
=\bk{\Phi_R'}{\Phi_R'}-A^2.
\label{eq:EFmetric1d}
\end{equation}
The exact nuclear equation is therefore
\begin{equation}
\left[
\frac{1}{2M}(-i\partial_R+A)^2+\varepsilon_\Psi(R)
\right]\chi=E\chi,
\label{eq:EFeq1d}
\end{equation}
with
\begin{equation}
\boxed{
\varepsilon_\Psi(R)
=\bk{\Phi_R}{\Hel(R)\Phi_R}
+\frac{g^{\Phi}(R)}{2M}.}
\label{eq:EFgeometry1d}
\end{equation}

For several Cartesian nuclear coordinates the same calculation gives
\begin{equation}
\left[
\sum_\mu\frac{1}{2M_\mu}(-i\nabla_\mu+A_\mu)^2
+\varepsilon_\Psi(\R)
\right]\chi=E\chi,
\label{eq:EFeq}
\end{equation}
where
\begin{equation}
A_\mu(\R)=-i\bk{\Phi_\R}{\nabla_\mu\Phi_\R},
\qquad
\varepsilon_\Psi(\R)=
\bk{\Phi_\R}{\Hel\Phi_\R}
+\sum_\mu\frac{g^\Phi_{\mu\mu}(\R)}{2M_\mu}.
\label{eq:EFgeometry}
\end{equation}

We can now compare exact factorization with single-surface BH using quantities already
derived.  In one dimension,
\begin{equation}
\begin{array}{ll}
\text{single-surface BH:}&
U_k(R)+\dfrac{g_k(R)}{2M},\\[8pt]
\text{EF:}&
\langle\Phi_R|\Hel(R)|\Phi_R\rangle+\dfrac{g^\Phi(R)}{2M}.
\end{array}
\label{eq:BHvsEFgeometry}
\end{equation}
Both contain the perpendicular derivative of a normalized electronic vector, but the vector
is different.  Single-surface BH prescribes the clamped electronic eigenstate $\phi_k(R)$
before the nuclear equation is solved.  EF uses the normalized electronic direction
$\Phi_R$ extracted from the particular exact molecular state.  Moreover,
$\langle\Phi_R|\Hel|\Phi_R\rangle$ is generally not a clamped electronic eigenvalue because
$\Phi_R$ need not itself be one of the eigenstates $\phi_k$.

The exact scalar potential is therefore state dependent.  Two exact molecular eigenstates of
the same Hamiltonian generally give different $\chi$, different $\Phi_\R$, and different
$\varepsilon_\Psi$.  Exact factorization does not replace all BO potential-energy functions
by one universal exact function; it gives an exact nuclear equation for the molecular state
being factorized.  Projected-derivative geometry in exact factorization has been developed by
Requist, Tandetzky, and Gross \cite{RequistTandetzkyGross2016}.  In the benchmark calculations
below we first calculate the molecular eigenstate and then construct its nuclear amplitude and
normalized electronic factor; we do not use the coupled EF equations as a practical
replacement for the molecular calculation.

\section{A two-state model with position-dependent electronic energy separation}
\label{sec:vibronic}

The Fern\'andez model has given us an exact example in which the electronic eigenstate changes
as the nuclear coordinate changes.  Section~\ref{sec:metric} showed that the size of this
change is measured by the quantum metric and that the single-surface BH correction is the
metric divided by twice the nuclear mass.  The same calculation also showed that changes of
an electronic eigenstate involve the energy separations from the other electronic states.

There is, however, one feature that the Fern\'andez model cannot test.  Its fixed-$X$
electronic eigenvalues satisfy
\begin{equation}
U_m(X)-U_n(X)=m-n,
\end{equation}
so every electronic energy separation is independent of $X$.  We therefore introduce a
second model with one purpose: to make the electronic energy separation depend on nuclear
position while keeping every electronic calculation elementary and explicit.  Two-state linear
vibronic-coupling Hamiltonians of this type have a long history in non-Born--Oppenheimer
molecular dynamics~\cite{KoppelDomckeCederbaum1984}; here the model is used only as a
transparent benchmark for the gap dependence of the derivative geometry.

\subsection{Constructing the two-state electronic problem}

Use the simplest electronic space containing two fixed orthonormal components,
\begin{equation}
|1\rangle=\begin{pmatrix}1\\0\end{pmatrix},
\qquad
|2\rangle=\begin{pmatrix}0\\1\end{pmatrix}.
\label{eq:fixedbasisvectors}
\end{equation}
These basis vectors do not depend on the nuclear coordinate $R$.  Before coupling the two
components, assign them the $R$-dependent electronic energies $+\kappa R$ and $-\kappa R$,
where $\kappa>0$.  Their electronic matrix would be
\begin{equation}
h_0(R)=\begin{pmatrix}\kappa R&0\\0&-\kappa R\end{pmatrix}.
\label{eq:h0matrix}
\end{equation}
We now allow the two components to interact by adding a constant real coupling
$\lambda>0$ between them:
\begin{equation}
h(R)=\begin{pmatrix}
\kappa R&\lambda\\
\lambda&-\kappa R
\end{pmatrix}.
\label{eq:hmatrix}
\end{equation}
Here $\kappa$ sets how quickly the two uncoupled energies separate as $|R|$ increases, while
$\lambda$ sets the coupling between the fixed electronic components.

The electronic eigenvalues follow from
\begin{equation}
\det[h(R)-eI]=e^2-\kappa^2R^2-\lambda^2=0.
\label{eq:twostatedeterminant}
\end{equation}
Define
\begin{equation}
\rho(R)=\sqrt{\kappa^2R^2+\lambda^2}.
\label{eq:rhodef}
\end{equation}
Then
\begin{equation}
e_\pm(R)=\pm\rho(R),
\qquad
\Delta(R)=e_+(R)-e_-(R)=2\rho(R).
\label{eq:vibgap}
\end{equation}
The separation is smallest at $R=0$,
\begin{equation}
\Delta_{\min}=\Delta(0)=2\lambda,
\label{eq:vibgapmin}
\end{equation}
and increases with $|R|$.  Thus $\lambda$ lets us vary the smallest electronic energy
separation while retaining explicit eigenvalues and eigenvectors.

The electronic matrix alone is enough to study its eigenvectors, but Section~\ref{sec:EFvib}
will also compare nuclear ground states.  We therefore add the same harmonic confining
potential to both electronic components.  Because this term is proportional to the identity,
it does not change either the electronic eigenvectors or their separation.  The full
molecular Hamiltonian is
\begin{equation}
\hat H=-\frac{1}{2M}\frac{d^2}{dR^2}I
+\frac12M\Omega^2R^2 I+h(R).
\label{eq:vibronic}
\end{equation}
The two potential-energy functions appearing in the corresponding one-surface nuclear
equations are therefore
\begin{equation}
U_\pm(R)=\frac12M\Omega^2R^2\pm\rho(R).
\label{eq:Usurfaces}
\end{equation}
Their separation is still $\Delta(R)$ in Eq.~\eqref{eq:vibgap}.

\subsection{How rapidly do the electronic eigenvectors change?}

Section~\ref{sec:metric} showed that the BH correction depends on how the clamped electronic
eigenvector changes with $R$.  We therefore need the eigenvectors of $h(R)$.  Since
Eq.~\eqref{eq:hmatrix} is a real symmetric $2\times2$ matrix, its normalized eigenvectors are
obtained by a rotation of the fixed basis.  It is convenient to describe that rotation by an
angle $\vartheta(R)$ defined by
\begin{equation}
\cos\vartheta=\frac{\kappa R}{\rho},
\qquad
\sin\vartheta=\frac{\lambda}{\rho}.
\label{eq:mixangle}
\end{equation}
A convenient pair of real normalized eigenvectors is
\begin{equation}
\phi_+(R)=
\begin{pmatrix}\cos(\vartheta/2)\\ \sin(\vartheta/2)\end{pmatrix},
\qquad
\phi_-(R)=
\begin{pmatrix}\sin(\vartheta/2)\\ -\cos(\vartheta/2)\end{pmatrix}.
\label{eq:explicitvibevecs}
\end{equation}

We introduced the model to determine how the changing electronic energy separation affects
this change of electronic direction.  Section~\ref{sec:metric} tells us to calculate
\begin{equation}
d_{+-}(R)=\bk{\phi_+}{\partial_R\phi_-}.
\label{eq:dcoupquestion}
\end{equation}
Differentiating Eq.~\eqref{eq:explicitvibevecs} gives
\begin{equation}
d_{+-}(R)=\frac12\vartheta'(R).
\label{eq:dcouptheta}
\end{equation}
From Eq.~\eqref{eq:mixangle},
\begin{equation}
\vartheta'(R)=-\frac{\kappa\lambda}{\kappa^2R^2+\lambda^2},
\label{eq:thetaderiv}
\end{equation}
so
\begin{equation}
d_{+-}(R)=-\frac{\kappa\lambda}{2(\kappa^2R^2+\lambda^2)}.
\label{eq:dcoupvib}
\end{equation}
There is only one electronic state orthogonal to $\phi_-$, so the metric coefficient and the
single-surface BH correction are
\begin{equation}
\boxed{
g_{RR}(R)=|d_{+-}(R)|^2
=\frac{\kappa^2\lambda^2}{4(\kappa^2R^2+\lambda^2)^2},
\qquad
W_\pm(R)=\frac{g_{RR}(R)}{2M}.}
\label{eq:gW}
\end{equation}
At $R=0$,
\begin{equation}
g_{RR}(0)=\frac{\kappa^2}{4\lambda^2}.
\label{eq:gcenter}
\end{equation}
The characteristic width of this peak is of order $\lambda/\kappa$.  Thus decreasing the
minimum electronic energy separation $2\lambda$ makes the electronic eigenvector change more
rapidly over a smaller interval of $R$.

The same calculation also shows where the coupling appears in two different electronic
representations.  In the fixed basis of Eq.~\eqref{eq:fixedbasisvectors}, the basis vectors
do not depend on $R$, so nuclear differentiation of the basis gives zero; the electronic
matrix instead contains the off-diagonal entry $\lambda$.  In the eigenvector basis of
Eq.~\eqref{eq:explicitvibevecs}, the electronic matrix is diagonal, but the basis vectors
depend on $R$ and nuclear differentiation produces $d_{+-}$.  Only after this calculation do
we attach the usual labels: the fixed representation is commonly called diabatic and the
eigenvector representation adiabatic \cite{PrljEtAl2026}.

\begin{table}[H]
\centering
\small
\renewcommand{\arraystretch}{1.15}
\begin{tabular}{@{}>{\raggedright\arraybackslash}p{0.22\textwidth}>{\raggedright\arraybackslash}p{0.25\textwidth}>{\raggedright\arraybackslash}p{0.24\textwidth}>{\raggedright\arraybackslash}p{0.20\textwidth}@{}}
\toprule
Electronic vectors & Electronic Hamiltonian & Nuclear derivative of basis vectors & Coupling appears as \\
\midrule
fixed components (diabatic) & nondiagonal & zero & off-diagonal electronic matrix element $\lambda$ \\
clamped eigenvectors (adiabatic) & diagonal & nonzero & derivative coupling $d_{+-}$ \\
\bottomrule
\end{tabular}
\caption{Two descriptions of the same two-state Hamiltonian.  The mathematical difference
is stated before the usual representation labels are introduced.}
\label{tab:diabaticadiabatic}
\end{table}

\subsection{Does a smaller minimum separation produce a larger total electronic change?}

The metric peak becomes higher and narrower as $\lambda$ decreases.  Does that mean that the
electronic state undergoes a larger total change, or that the same change is concentrated
into a smaller interval of $R$?  Equation~\eqref{eq:FSlineelement} gives the accumulated
distance along the family of normalized electronic states:
\begin{equation}
L(R)=\int_{-\infty}^{R}\sqrt{g_{RR}(R')}\,dR'
=\frac12\left[
\arctan\!\left(\frac{\kappa R}{\lambda}\right)+\frac{\pi}{2}
\right].
\label{eq:FSlength}
\end{equation}
Therefore
\begin{equation}
\boxed{L(+\infty)=\frac{\pi}{2},}
\label{eq:FSpi2}
\end{equation}
independent of $\kappa$ and $\lambda$.  The same conclusion is visible directly from the
mixing angle: as $R$ runs from $-\infty$ to $+\infty$, $\vartheta$ changes by $\pi$ for every
nonzero $\lambda$.  Decreasing $\lambda$ therefore does not increase the total change of the
electronic direction; it compresses that fixed change into a smaller region of nuclear
coordinate.

\begin{figure}[H]
\centering
\includegraphics[width=0.70\textwidth]{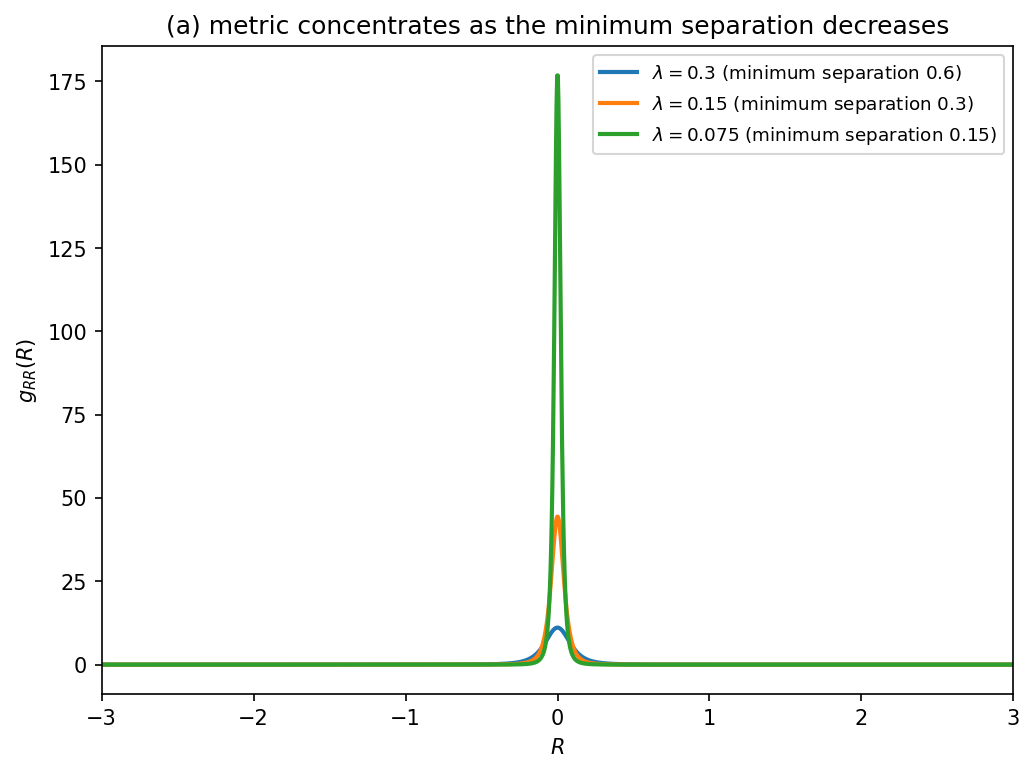}\par\vspace{0.7em}
\includegraphics[width=0.70\textwidth]{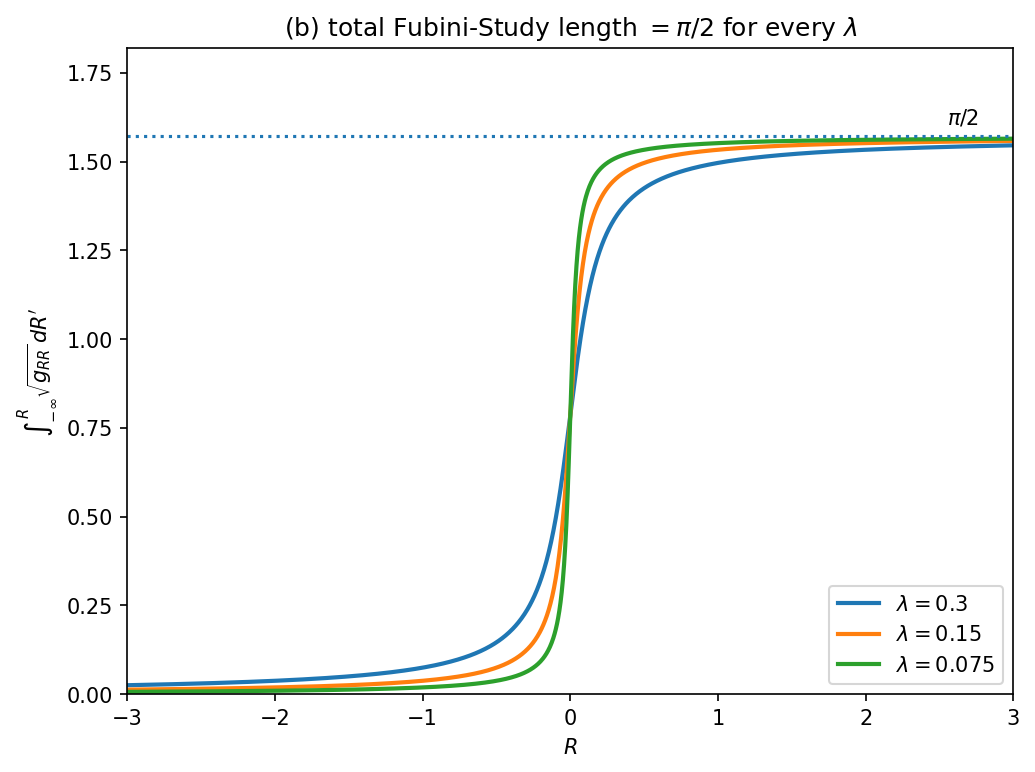}
\caption{Quantum geometry of the position-dependent-gap benchmark ($\kappa=2$).
\textbf{(a)} The metric $g_{RR}(R)$ becomes higher and narrower as the minimum electronic
energy separation $2\lambda$ decreases.  \textbf{(b)} The accumulated Fubini--Study length
approaches the same limit $\pi/2$ for every $\lambda$.  The changing energy separation
changes where the electronic direction changes most rapidly, not the total distance between
the limiting directions.}
\label{fig:geom}
\end{figure}
\FloatBarrier

\section{What do the three nuclear potentials predict in the position-dependent-gap model?}
\label{sec:EFvib}

Section~\ref{sec:vibronic} isolated the feature absent from the Fern\'andez model: the
electronic energy separation changes with nuclear position.  We found that the metric
coefficient, and hence the single-surface BH correction $W_-(R)$, now also varies with $R$.
This changes the nature of the BO--BH comparison.  In the Fern\'andez model $W_n$ is a
constant, so BO and single-surface BH have identical nuclear eigenfunctions and differ only
in their energies.  Here $W_-(R)$ changes the potential-energy function itself and can
therefore change the nuclear wavefunction.

We now calculate the ground state in three ways: the BO approximation, the single-surface BH
approximation, and the exact two-component molecular problem.  From the exact molecular
state we then construct the EF nuclear amplitude and potential.  This allows a direct
comparison of the three nuclear potentials, their ground-state energies, and the nuclear
wavefunctions they produce.

\subsection{The BO, single-surface BH, and exact molecular equations}

On the lower clamped electronic state $\phi_-(R)$, the BO nuclear equation is
\begin{equation}
\left[-\frac{1}{2M}\frac{d^2}{dR^2}+U_-(R)\right]
\chi_{\rm BO}(R)=\Ecal_0\chi_{\rm BO}(R),
\label{eq:vibBOeq}
\end{equation}
and the approximate BO molecular wavefunction is
\begin{equation}
\Psi_{\rm BO}(R)=\chi_{\rm BO}(R)\phi_-(R).
\label{eq:vibBOwf}
\end{equation}
The single-surface BH equation retains the position-dependent correction derived in Section
\ref{sec:vibronic}:
\begin{equation}
\left[-\frac{1}{2M}\frac{d^2}{dR^2}+U_-(R)+W_-(R)\right]
\chi_{\rm BH}(R)=E_0^A\chi_{\rm BH}(R),
\label{eq:vibBHeq}
\end{equation}
with approximate molecular wavefunction
\begin{equation}
\Psi_{\rm BH}(R)=\chi_{\rm BH}(R)\phi_-(R).
\label{eq:vibBHwf}
\end{equation}
Both one-surface approximations prescribe the same lower clamped electronic eigenvector; they
differ in the potential used to determine its nuclear coefficient.

The exact molecular problem is solved in the fixed two-component basis of Section
\ref{sec:vibronic}.  Write the exact ground state as
\begin{equation}
\Psi_0(R)=
\begin{pmatrix}\psi_1(R)\\ \psi_2(R)\end{pmatrix}
\label{eq:twocomp}
\end{equation}
with exact energy $E_0$.  Substituting this vector into Eq.~\eqref{eq:vibronic} gives the two
coupled nuclear equations
\begin{align}
\left[-\frac{1}{2M}\frac{d^2}{dR^2}
+\frac12M\Omega^2R^2+\kappa R\right]\psi_1
+\lambda\psi_2
&=E_0\psi_1,
\label{eq:vibexactcomp1}\\
\lambda\psi_1+
\left[-\frac{1}{2M}\frac{d^2}{dR^2}
+\frac12M\Omega^2R^2-\kappa R\right]\psi_2
&=E_0\psi_2.
\label{eq:vibexactcomp2}
\end{align}
These equations make explicit what is meant here by solving the full two-component molecular
problem.

\subsection{Constructing the exact-factorization amplitude and potential}

Once Eqs.~\eqref{eq:vibexactcomp1}--\eqref{eq:vibexactcomp2} have been solved, no additional
molecular approximation is needed to construct the EF factors.  At each $R$, the numbers
$\psi_1(R)$ and $\psi_2(R)$ are simply the components of the exact electronic vector.  We
write that same vector as its length times a normalized direction:
\begin{equation}
\chi_{\rm EF}(R)=\sqrt{\psi_1(R)^2+\psi_2(R)^2},
\qquad
\Phi_R=\frac{1}{\chi_{\rm EF}(R)}
\begin{pmatrix}\psi_1(R)\\ \psi_2(R)\end{pmatrix}.
\label{eq:vibmargcond}
\end{equation}
For the ground state $\chi_{\rm EF}$ can be chosen positive.

Section~\ref{sec:needEF} derived the exact EF nuclear equation.  For the present real
stationary ground state both factors can be chosen real, so
\begin{equation}
A_R=-i\bk{\Phi_R}{\partial_R\Phi_R}=0.
\label{eq:vibAzero}
\end{equation}
The exact nuclear equation therefore reduces to
\begin{equation}
\left[-\frac{1}{2M}\frac{d^2}{dR^2}+\varepsilon_{\rm EF}(R)\right]
\chi_{\rm EF}(R)=E_0\chi_{\rm EF}(R).
\label{eq:vibEFnucleareq}
\end{equation}
Solving this equation for the potential gives
\begin{equation}
\varepsilon_{\rm EF}(R)=E_0+\frac{1}{2M}
\frac{\chi_{\rm EF}''(R)}{\chi_{\rm EF}(R)}.
\label{eq:eps}
\end{equation}
Thus the exact molecular eigenstate determines an exact nuclear amplitude, and Eq.
\eqref{eq:eps} reconstructs the exact potential for the nuclear equation satisfied by that
amplitude.

Section~\ref{sec:needEF} also gave a second expression for the same potential in terms of the
normalized electronic direction.  Let
\begin{equation}
H_{\rm el}(R)=\frac12M\Omega^2R^2 I+h(R).
\label{eq:vibHeldef}
\end{equation}
Then
\begin{equation}
\varepsilon_{\rm EF}(R)
=\bk{\Phi_R}{H_{\rm el}(R)\Phi_R}
+\frac{1}{2M}\bk{\partial_R\Phi_R}{\partial_R\Phi_R}.
\label{eq:EFstructure}
\end{equation}
Because Eq.~\eqref{eq:vibAzero} gives $A_R=0$, the projected derivative term derived in Section~\ref{sec:needEF} reduces here to the derivative norm shown in Eq.~\eqref{eq:EFstructure}.  The single-surface
BH potential is
\begin{equation}
U_-(R)+W_-(R)
=U_-(R)+\frac{1}{2M}\bk{\partial_R\phi_-}{\partial_R\phi_-}.
\label{eq:BHstructurecompare}
\end{equation}
The resemblance is now concrete, but the two expressions differ in two ways.  First,
$\phi_-(R)$ is the prescribed lower clamped eigenstate, whereas $\Phi_R$ is the exact
normalized electronic direction extracted from the molecular ground state.  Second,
$\langle\Phi_R|H_{\rm el}|\Phi_R\rangle$ need not equal $U_-(R)$ because $\Phi_R$ need not be
a clamped electronic eigenvector.

\subsection{Numerical ground-state comparison}

Unlike the Fern\'andez Hamiltonian, the coupled nuclear equations
\eqref{eq:vibexactcomp1}--\eqref{eq:vibexactcomp2} do not provide the simple closed-form
solution needed for this comparison, so we calculate their ground state numerically.  We use
$R\in[-6,6]$ with Dirichlet boundaries and a second-order finite-difference approximation to
the nuclear second derivative.  Calculations on uniform grids with $N=1600$, $3200$, and
$6400$ are combined by Richardson extrapolation, using the expected $N^{-2}$ leading
discretization error.  The parameters are
\begin{equation}
M=4,\qquad \Omega=0.5,\qquad \kappa=2,\qquad \lambda=0.3.
\label{eq:vibparams}
\end{equation}
For this model we write
\begin{equation}
\begin{array}{ll}
\Ecal_0 & \text{BO ground-state energy},\\[2pt]
E_0 & \text{exact molecular ground-state energy},\\[2pt]
E_0^A & \text{single-surface BH ground-state energy}.
\end{array}
\label{eq:vibenergynotation}
\end{equation}
The extrapolated values are
\begin{equation}
\Ecal_0=-1.762524245,
\qquad
E_0=-1.762173980,
\qquad
E_0^A=-1.762161668.
\label{eq:vibenergies}
\end{equation}
For these parameter values,
\begin{equation}
\Ecal_0<E_0<E_0^A.
\label{eq:vibordering}
\end{equation}
The non-strict bracket $\Ecal_0\le E_0\le E_0^A$ is not merely numerical: the lower
inequality is the Brattsev--Epstein ground-state bound and the upper inequality follows from the
same Rayleigh--Ritz restriction used in Section~\ref{sec:fernandez}
~\cite{Brattsev1965,Epstein1966}.  What is numerical here is the strictness for this parameter
set and the magnitudes of the two differences.

The three potential-energy functions differ visibly near $R=0$, where the two electronic
eigenvalues are closest.  We therefore ask whether this local difference produces a
comparably large difference in the nuclear ground states.  For normalized real nuclear
amplitudes, the inner product is one when the two functions are identical and decreases as
they differ.  The calculated overlaps are
\begin{equation}
\bk{\chi_{\rm EF}}{\chi_{\rm BO}}=0.999984005,
\qquad
\bk{\chi_{\rm EF}}{\chi_{\rm BH}}=0.999999973.
\label{eq:viboverlaps}
\end{equation}
Thus both one-surface nuclear amplitudes are very close to the exact EF amplitude, and the BH
result is substantially closer.  A visible local change in a potential need not produce an
equally visible change in a particular nuclear state: the effect depends on where the nuclear
wavefunction has appreciable amplitude and on the size of the potential change relative to
the other energy scales in the nuclear equation.

\begin{table}[H]
\centering
\small
\renewcommand{\arraystretch}{1.15}
\begin{tabular}{@{}>{\raggedright\arraybackslash}p{0.17\textwidth}
                >{\raggedright\arraybackslash}p{0.18\textwidth}
                >{\raggedright\arraybackslash}p{0.16\textwidth}
                >{\raggedright\arraybackslash}p{0.14\textwidth}
                >{\raggedright\arraybackslash}p{0.17\textwidth}@{}}
\toprule
Calculation & Electronic state used & Nuclear potential & Ground energy & Overlap with exact EF amplitude \\
\midrule
BO & lower clamped eigenstate $\phi_-$ & $U_-$ & $\Ecal_0$ & $0.999984005$ \\
Single-surface BH & lower clamped eigenstate $\phi_-$ & $U_-+W_-$ & $E_0^A$ & $0.999999973$ \\
EF & exact direction $\Phi_R$ & $\varepsilon_{\rm EF}$ & $E_0$ & $1$ \\
\bottomrule
\end{tabular}
\caption{Ground-state comparison for Eq.~\eqref{eq:vibparams}.  The energies use
second-order finite differences followed by Richardson extrapolation over
$N=1600$--$6400$.}
\label{tab:EFcompare}
\end{table}

\begin{figure}[p]
\centering
\includegraphics[width=0.74\textwidth]{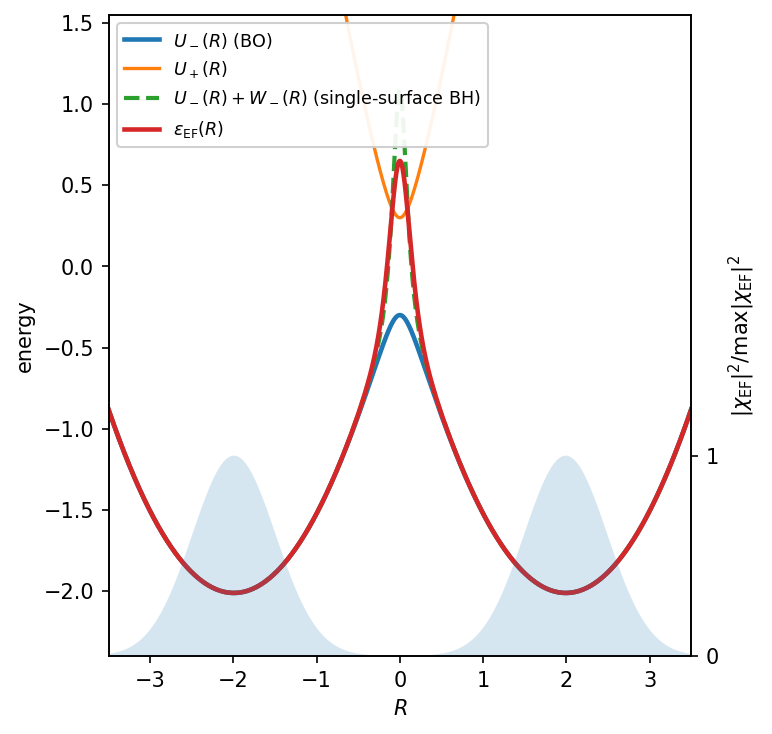}
\caption{Ground-state potential-energy functions in the position-dependent-gap model for the
parameters in Eq.~\eqref{eq:vibparams}: the BO functions $U_\pm$, the single-surface BH
potential $U_-+W_-$, the exact EF potential $\varepsilon_{\rm EF}$, and the exact nuclear
marginal density.  The largest local differences occur near $R=0$, where the electronic
energy separation is smallest.}
\label{fig:EFpot}
\end{figure}

\begin{figure}[p]
\centering
\includegraphics[width=0.88\textwidth]{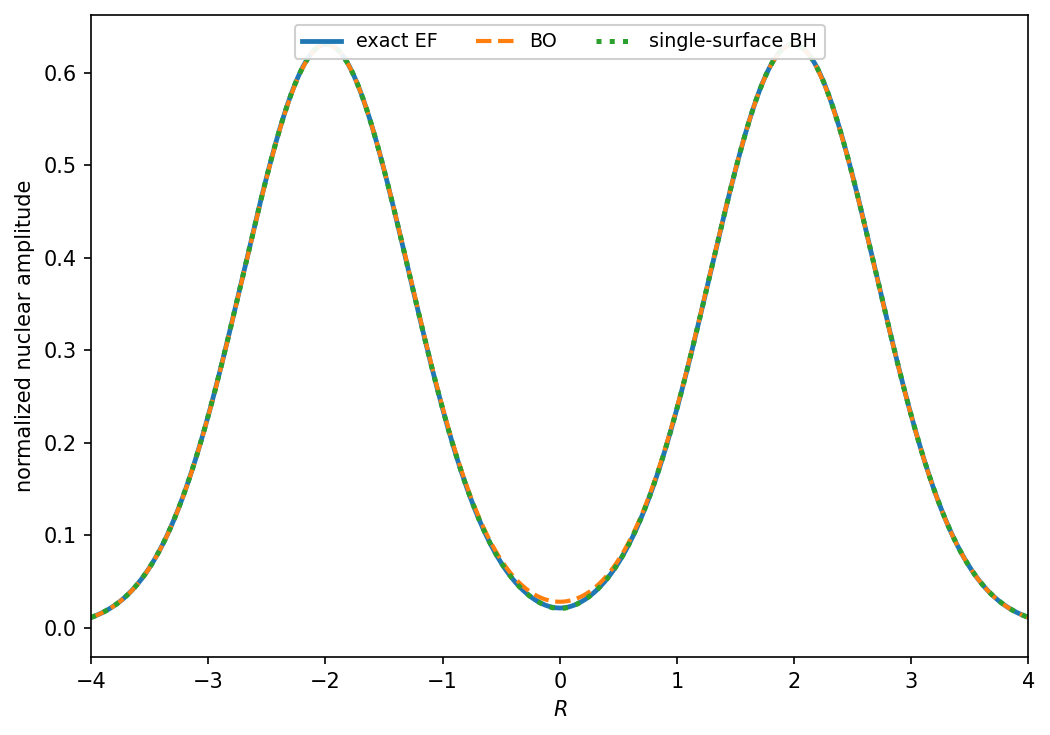}
\caption{The exact EF nuclear amplitude compared with the BO and single-surface BH nuclear
ground states for Eq.~\eqref{eq:vibparams}.  The effective potentials differ most near
$R=0$, but the three normalized nuclear amplitudes are nearly coincident.}
\label{fig:EFwf}
\end{figure}
\FloatBarrier

The comparison has therefore added a point that the local metric alone cannot make.  The
metric identifies where the electronic state changes rapidly, but the effect of that region
on an energy or nuclear wavefunction depends on where the nuclear state itself is large.
Section~\ref{sec:UQ} now combines this observation with the exact Fern\'andez expansion to
identify the quantities that control one-surface error.

\section{What controls the error of a one-surface description?}
\label{sec:UQ}

The preceding calculations have shown that there is no single statement such as
``single-surface BH is more accurate than BO'' that applies to every molecular state.  In the
Fern\'andez model the BH correction provides the complementary upper bound for the ground
state, but the ordering changes for excited states.  In the two-state model the BH correction
is strongly concentrated where the electronic energy separation is smallest, yet the
single-surface BH nuclear ground state remains extremely close to the exact EF amplitude.
We now ask whether these observations have a common explanation.

Because the Fern\'andez energies are known exactly, we can subtract the BO energy from the
exact molecular energy and expand the difference for large nuclear mass.  This tells us which
terms the one-surface description contains and which terms it misses.  Following Fern\'andez,
let
\begin{equation}
\eta=M^{-1/4}
\label{eq:etadefUQ}
\end{equation}
be a convenient bookkeeping variable.  On the lowest electronic surface the exact spectrum
gives
\begin{equation}
E_{0v}-\Ecal_{0v}
=\frac{\beta^2}{2M}
\left[\frac12-\left(v+\frac12\right)
\sqrt{\frac{1-\beta^2}{M}}\right]+O(\eta^8).
\label{eq:fernbudget}
\end{equation}
Appendix~\ref{app:budget} obtains this expression directly from the exact normal-mode
frequencies.  The two displayed contributions have different origins, and the quantities
needed to identify them have already appeared earlier in the paper.

For the lowest Fern\'andez electronic state, Eq.~\eqref{eq:ferngmetric} gives
\begin{equation}
g_{n=0}=\frac{\beta^2}{2}.
\label{eq:g0explicit}
\end{equation}
For brevity we write this as $g_0$ below.  The first term in Eq.~\eqref{eq:fernbudget} is
therefore
\begin{equation}
\frac{g_0}{2M}=\frac{\beta^2}{4M},
\label{eq:firstbudgetterm}
\end{equation}
which is exactly the positive diagonal BH correction.  Thus the leading positive
contribution to the exact BO energy error is the term already identified in Section
\ref{sec:metric} with the change of the retained electronic eigenstate.

The second contribution contains the nuclear-state dependence.  Let $\chi_v^{(\omega)}$ be
the BO nuclear oscillator eigenfunction introduced in Section~\ref{sec:fernandez}.  For a
harmonic oscillator the average kinetic and potential energies are equal, so
\begin{equation}
\bk{\chi_v^{(\omega)}}{\Tn\chi_v^{(\omega)}}
=\frac12\left(v+\frac12\right)
\sqrt{\frac{1-\beta^2}{M}}.
\label{eq:Tnfern}
\end{equation}
The same derivative couplings that form the metric occur in the next term of the exact
large-mass expansion, but with one additional division by the electronic energy separation.
This inverse gap follows directly from Fern\'andez's general result, not from the special
unit-gap oscillator.  His Eq.~(47), in the present notation, contains
\[
K=\sum_{j>0}
\frac{|\langle\phi_j|(\partial V/\partial u)|\phi_0\rangle|^2}
{(U_j-U_0)^3}.
\]
Using the off-diagonal Hellmann--Feynman relation in Eq.~\eqref{eq:HF}, the numerator contributes
two powers of $U_j-U_0$, so that
\[
K=\sum_{j>0}\frac{|d_{j0}|^2}{U_j-U_0}.
\]
Fern\'andez's Eq.~(48) supplies the accompanying nuclear-state factor and the negative sign of
the sixth-order nonadiabatic contribution~\cite{Fernandez1994}.  For the lowest electronic
state the gap-weighted combination in our notation is therefore
\begin{equation}
\sum_{j\ne0}\frac{|d_{j0}|^2}{U_j-U_0}.
\label{eq:gapweightedsum}
\end{equation}
In the Fern\'andez model only the $j=1$ derivative coupling contributes when $n=0$, and
$U_1-U_0=1$, so
\begin{equation}
\sum_{j\ne0}\frac{|d_{j0}|^2}{U_j-U_0}=\frac{\beta^2}{2}=g_0.
\label{eq:gapweightedsumfern}
\end{equation}
The numerical equality with $g_0$ is special to this unit-gap model; the two expressions
contain different information in general.

Using Eqs.~\eqref{eq:Tnfern} and \eqref{eq:gapweightedsum}, the exact expansion becomes
\begin{equation}
\boxed{
E_{0v}-\Ecal_{0v}
=\frac{g_0}{2M}
-\frac{2}{M}
\bk{\chi_v^{(\omega)}}{\Tn\chi_v^{(\omega)}}
\sum_{j\ne0}\frac{|d_{j0}|^2}{U_j-U_0}
+O(\eta^8).}
\label{eq:geometricbudget}
\end{equation}
This is Fern\'andez's sixth-order result written in quantities whose roles have already been
identified \cite{Fernandez1994}.  The first correction depends on how rapidly the retained
electronic state changes and carries the expected factor $1/M$.  The next correction depends
on the nuclear kinetic energy and on the same derivative couplings, but it also contains one
additional division by the electronic energy separation.

Equation~\eqref{eq:geometricbudget} explains the nuclear-excitation counterexample $(0,2)$
in Section~\ref{sec:fernandez}.  The positive diagonal term is independent of $v$, whereas the
magnitude of the negative omitted-state term grows with $v+1/2$.  Increasing nuclear excitation
can therefore make the omitted term larger than the retained diagonal correction, so keeping the
positive BH correction alone need not improve the excited-state energy.  It does not explain the
$(2,0)$ reversal: the budget displayed here is for $n=0$, and through sixth order Fern\'andez's
Eq.~(57) keeps $E_{nv}-E^A_{nv}$ negative for every $n$.  The electronic-excitation reversal
first appears at eighth order, through the term
$(n+\tfrac12)\beta^2(4-5\beta^2)\eta^8/8$ in his Eq.~(57)~\cite{Fernandez1994}.
The two counterexamples therefore have different perturbative origins.  McCoy had already
shown in the same bilinear family that a large fast--slow frequency separation is not by itself
sufficient; the coupling must also be small relative to that separation~\cite{McCoy2001}.

The Fern\'andez model cannot show how the omitted-state term changes when the electronic
energy separation itself varies with nuclear position.  The two-state model supplies that
missing comparison.  There is only one omitted electronic state, so the corresponding local
factor is
\begin{equation}
\frac{|d_{+-}(R)|^2}{U_+(R)-U_-(R)}
=\frac{g_{RR}(R)}{\Delta(R)}.
\label{eq:Gvib}
\end{equation}
At $R=0$, Eqs.~\eqref{eq:vibgapmin} and \eqref{eq:gcenter} give
\begin{equation}
g_{RR}(0)\propto\lambda^{-2},
\qquad
\frac{g_{RR}(0)}{\Delta(0)}\propto\lambda^{-3}.
\label{eq:gapscaling}
\end{equation}
Thus, as the minimum electronic energy separation decreases, the quantity entering the first
omitted-state correction grows faster than the diagonal BH metric term itself.  This is a
local statement; Section~\ref{sec:EFvib} has already shown why the effect on a particular
energy or nuclear wavefunction also depends on where that nuclear state is appreciable.

\begin{table}[H]
\centering
\small
\renewcommand{\arraystretch}{1.18}
\begin{tabular}{@{}>{\raggedright\arraybackslash}p{0.28\textwidth}
                >{\raggedright\arraybackslash}p{0.29\textwidth}
                >{\raggedright\arraybackslash}p{0.32\textwidth}@{}}
\toprule
Question & Fern\'andez model & Two-state model \\
\midrule
Electronic energy separation & constant & depends on $R$ \\
Electronic-state change & constant metric coefficient & concentrated where the two electronic energies are closest \\
Quantity varied in the comparison & nuclear excitation $v$ & minimum electronic separation $2\lambda$ \\
What becomes important & nuclear-state dependence & additional inverse-gap dependence \\
Main lesson & BH need not improve excited-state energies & smaller electronic separation magnifies omitted-state effects \\
\bottomrule
\end{tabular}
\caption{The two benchmarks isolate different parts of the one-surface error mechanism.}
\label{tab:twobenchmarks}
\end{table}

\begin{figure}[H]
\centering
\includegraphics[width=0.60\textwidth]{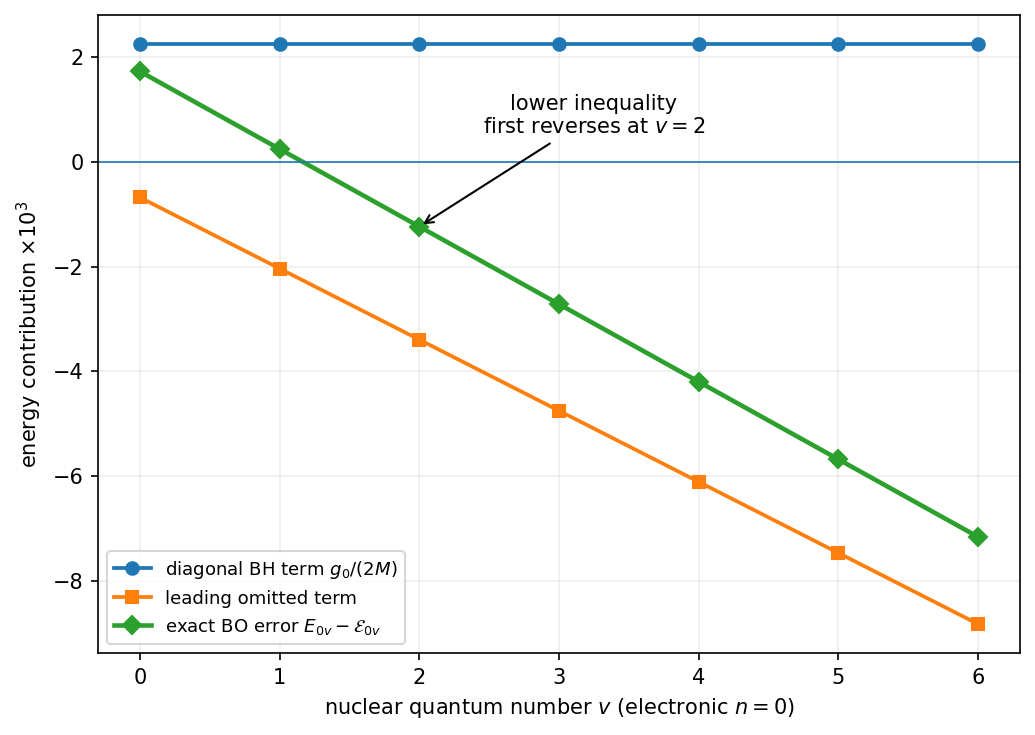}
\caption{Fern\'andez model ($M=10$, $\beta=0.3$, electronic $n=0$): the exact BO energy
error is resolved through sixth order into the positive diagonal BH term and the negative leading
omitted-state term in Eq.~\eqref{eq:geometricbudget}.  The arrow marks $v=2$, where the
exact lower inequality has already reversed.  The exact-error curve also contains
$O(\eta^8)$ and higher terms and therefore is not the algebraic sum of the two displayed
sixth-order contributions.}
\label{fig:budget}
\end{figure}
\FloatBarrier

The two models therefore answer the question posed at the start of the paper.  A large nuclear
mass helps because the derivative terms carry inverse powers of the nuclear mass.  But mass
alone does not determine the accuracy of a one-surface description.  The error also depends
on how rapidly the retained electronic state changes with nuclear position, the energy
separations from the electronic states that have been omitted, and the particular nuclear
state, which determines where in nuclear configuration space those local differences are
sampled.

\section{Conclusion}
\label{sec:conclusion}

The Born--Oppenheimer, Born--Huang, and exact-factorization descriptions begin from the same
molecular Schr\"odinger equation but do different things with the electronic degrees of
freedom.  The full Born--Huang expansion is an exact expansion of the molecular wavefunction
in electronic eigenstates calculated at fixed nuclear geometry.  Exact factorization is
another exact representation of the same molecular state.  At each nuclear configuration,
the Born--Huang coefficients are the components of the electronic vector in a chosen basis,
whereas exact factorization writes that same vector as its length times a normalized
direction.  BO and single-surface BH arise only after the Born--Huang expansion is restricted
to one electronic eigenstate.  BO omits the terms produced when that eigenstate is
differentiated with respect to the nuclear coordinates; single-surface BH retains the
diagonal part of those terms while omitting the coupling to the other electronic states.

The Fern\'andez oscillator makes these distinctions completely explicit because the full
molecular problem, the BO approximation, and the single-surface BH approximation can all be
solved analytically.  Fern\'andez already reported the Brattsev ground-state bracket; the exact
difference identities derived here establish its strict form for every $M>0$ and
$0<|\beta|<1$:
\[
\Ecal_{00}<E_{00}<E^A_{00},
\]
so the BO energy is a strict lower bound and the single-surface BH energy is a strict upper
bound.  This ordering is not a general rule for excited states.  Exact excited-state examples
show both $E_{nv}<\Ecal_{nv}<E^A_{nv}$ and $\Ecal_{nv}<E^A_{nv}<E_{nv}$.  Thus retaining
the diagonal BH correction does not by itself guarantee a more accurate excited-state
energy.

The derivative responsible for the BH correction also has a simple geometric interpretation.
For a normalized electronic eigenstate, the part of its derivative perpendicular to the
state measures how much the electronic state itself changes when the nuclear coordinate
changes.  The squared norm of this perpendicular derivative is the quantum-metric coefficient
$g$, and in one nuclear coordinate the single-surface BH correction is
\[
W=\frac{g}{2M}.
\]
The Fern\'andez model has the special property that both its electronic energy separations and
its metric coefficient are independent of the nuclear coordinate.  It therefore cannot show
how a changing electronic energy separation modifies these derivative terms.

The two-state model was introduced for that reason.  Its two electronic
eigenvalues have a separation
\[
\Delta(R)=2\sqrt{\kappa^2R^2+\lambda^2},
\]
which is smallest at $R=0$ and increases with $|R|$.  Decreasing this minimum separation
causes the electronic eigenvector to change more rapidly over a smaller interval of nuclear
coordinate: the metric peak grows as $\lambda^{-2}$ while becoming correspondingly narrower.
The total accumulated Fubini--Study distance remains $\pi/2$.  Thus changing the minimum
energy separation changes where the electronic state changes most rapidly, rather than the
total change between the two limiting electronic directions.

The same model also permits BO, single-surface BH, and EF nuclear calculations to be compared
directly.  Unlike the Fern\'andez model, its BH correction depends on $R$, so BO and
single-surface BH no longer differ merely by a constant energy shift.  For the ground-state
calculation considered here, the exact EF potential differs visibly from both one-surface
potentials near the region where the two electronic energies are closest.  Nevertheless,
\[
\bk{\chi_{\rm EF}}{\chi_{\rm BO}}=0.999984005,
\qquad
\bk{\chi_{\rm EF}}{\chi_{\rm BH}}=0.999999973.
\]
A visible local difference between potential-energy functions therefore need not produce a
comparably large difference in a particular nuclear state.

The exact Fern\'andez expansion explains why these observations have a common origin.
Its first correction to the BO energy is the positive diagonal BH term.  The next correction
contains the same derivative couplings, an additional division by the electronic energy
separation, and a factor determined by the nuclear state.  In Fern\'andez the electronic gaps
are constant, so increasing nuclear excitation exposes the nuclear-state dependence.  In the
two-state model the gap varies with nuclear position, so the additional inverse-gap dependence
can be seen directly.

The two benchmarks therefore give a more precise answer than the statement that BO becomes
accurate because nuclei are heavy.  Large nuclear mass suppresses the derivative terms, but
mass alone does not determine the error of a one-surface description.  The error also depends
on how rapidly the retained electronic state changes with nuclear position, the energy
separations from the electronic states that have been omitted, and where the nuclear
wavefunction has appreciable amplitude.  These quantities are distinct, and the simple
models used here allow their roles to be separated explicitly.

\appendix

\section{Closed-form calculations for the Fern\'andez benchmark}
\label{app:fernandez}

This appendix collects the algebra used in Sec.~\ref{sec:fernandez} and the quantities used
in the error decomposition of Sec.~\ref{sec:UQ}.  The normal-coordinate solution of a
bilinearly coupled oscillator is standard \cite{ZunigaBastidaRequena2017}; it is reproduced
so that every benchmark formula can be checked directly.

\subsection{Normal modes and the exact spectrum}

In the mass-weighted coordinates $y_1=x$, $y_2=\sqrt M\,X$ the kinetic energy has unit
mass in both directions.  Write the quadratic potential as
$\tfrac12\mathbf y^{\!\top}\mathcal K\mathbf y$, thereby defining the stiffness matrix
$\mathcal K$:
\begin{equation}
\mathcal K=
\begin{pmatrix}
1&\beta/\sqrt M\\[2pt]
\beta/\sqrt M&1/M
\end{pmatrix}.
\end{equation}
The original potential matrix
$\bigl[\begin{smallmatrix}1&\beta\\ \beta&1\end{smallmatrix}\bigr]$ is positive definite
precisely when $1-\beta^2>0$.  An orthogonal rotation
$\mathbf z=O^{\!\top}\mathbf y$ diagonalizes $\mathcal K$ without changing the flat
Laplacian.  Its eigenvalues are
\begin{equation}
k_\pm=
\frac{(M+1)\pm\sqrt{(M-1)^2+4M\beta^2}}{2M},
\end{equation}
so
\begin{equation}
\hat H=\sum_{s=\pm}\left(-\frac12\partial_{z_s}^2+\frac12k_s z_s^2\right)
\end{equation}
and Eq.~\eqref{eq:exactspectrum} follows.  The identities
\begin{equation}
k_++k_-=1+\frac1M,
\qquad
k_+k_-=\frac{1-\beta^2}{M}=\omega^2
\label{eq:kidentities}
\end{equation}
are the trace and determinant relations used in Proposition~\ref{prop:brattsev}.
If the rotation is written so that
$y_2=\sin\gamma\,z_++\cos\gamma\,z_-$, then
\begin{equation}
\tan2\gamma=\frac{2\beta\sqrt M}{M-1},
\qquad
\gamma=\frac{\beta}{\sqrt M}+O(M^{-3/2}).
\end{equation}

\subsection{Large-mass expansion and convergence}

Let $\eta=M^{-1/4}$, so that $M^{-1}=\eta^4$.  Expanding the exact frequencies gives
\begin{align}
\sqrt{k_+}&=1+\frac{\beta^2}{2(M-1)}+O(M^{-2}),\\
\sqrt{k_-}&=\omega\left[1-\frac{\beta^2}{2(M-1)}+O(M^{-2})\right].
\label{eq:kexpansions}
\end{align}
The slow-to-fast frequency ratio is therefore $O(\eta^2)$ and relative corrections begin at
$O(\eta^4)$.

With $M=\eta^{-4}$ in Eq.~\eqref{eq:kpm}, the nonzero singularities nearest
$\eta=0$ come from the zeros of the discriminant,
\begin{equation}
M_\pm=(1-2\beta^2)\pm2i\beta\sqrt{1-\beta^2}.
\end{equation}
Since
\begin{equation}
|M_\pm|^2=(1-2\beta^2)^2+4\beta^2(1-\beta^2)=1,
\end{equation}
the corresponding singularities lie on $|\eta|=1$.  The expansion around $\eta=0$
therefore converges for every $M>1$, as noted by Fern\'andez \cite{Fernandez1994}.

\subsection{Clamped problem, diagonal correction, and derivative couplings}

With $\xi=x+\beta X$,
\begin{equation}
\Hel=-\frac12\partial_\xi^2+\frac12\xi^2
     +\frac12(1-\beta^2)X^2.
\end{equation}
Hence $\phi_n(x;X)=\varphi_n(\xi)$ and Eq.~\eqref{eq:Un} follows.  The derivative is
$\partial_X\phi_n=\beta\varphi_n'(\xi)$.  The oscillator virial theorem
\cite{Fock1930}, or equivalently the ladder-operator identity
\begin{equation}
\varphi_n'=\frac{\sqrt n\,\varphi_{n-1}-\sqrt{n+1}\,\varphi_{n+1}}{\sqrt2},
\end{equation}
gives
\begin{equation}
W_n=\frac{\beta^2}{2M}\left(n+\frac12\right)
\end{equation}
and the only nonzero derivative couplings,
\begin{equation}
d_{n-1,n}=\beta\sqrt{\frac n2},
\qquad
d_{n+1,n}=-\beta\sqrt{\frac{n+1}{2}}.
\label{eq:ferndcouplings}
\end{equation}
Because $U_{n\pm1}-U_n=\pm1$,
\begin{equation}
g_n=\sum_{j\ne n}|d_{jn}|^2
=\beta^2\left(n+\frac12\right).
\label{eq:fern_g_app}
\end{equation}
For the lowest electronic state only $j=1$ contributes, and the corresponding gap is one:
\begin{equation}
\sum_{j\ne0}\frac{|d_{j0}|^2}{U_j-U_0}
=\frac{\beta^2}{2}.
\label{eq:fern_gapweighted_app}
\end{equation}
For $n=0$ this happens to equal $g_0$, but the two expressions contain different
information: the second includes an additional division by the electronic energy separation.

\subsection{Ground-state exact factorization}

The normal-mode ground state is Gaussian.  Since
$y_2=\sqrt M X=\sin\gamma\,z_++\cos\gamma\,z_-$ and
$\langle z_\pm^2\rangle=(2\omega_\pm)^{-1}$,
\begin{equation}
\langle X^2\rangle
=\frac{1}{2M}\left(\frac{\sin^2\gamma}{\omega_+}
                   +\frac{\cos^2\gamma}{\omega_-}\right).
\end{equation}
The nuclear amplitude is $\chi_{00}(X)\propto\exp(-aX^2/2)$ with
$a=[2\langle X^2\rangle]^{-1}$.  Inserting it into the exact nuclear equation gives
\begin{equation}
\varepsilon_{00}(X)-E_{00}
=\frac{1}{2M}\frac{\chi_{00}''}{\chi_{00}}
=\frac{a^2X^2-a}{2M},
\end{equation}
which is Eq.~\eqref{eq:fernEF} with $\omega_{\rm EF}=a/M$.

For completeness, the closed-form conditional quantities quoted in
Eqs.~\eqref{eq:fernEFmetricclosed} and \eqref{eq:fernEFHelclosed} follow directly from the
other row of the same orthogonal rotation,
\[
x=\cos\gamma\,z_+-\sin\gamma\,z_-.
\]
Writing $c=\cos\gamma$ and $s=\sin\gamma$, inversion gives
$z_+=cx+s\sqrt{M}X$ and $z_-=-sx+c\sqrt{M}X$.  At fixed $X$ the conditional Gaussian
therefore has width $A_{xx}=\omega_+c^2+\omega_-s^2$ and center
\[
x_0(X)=-\frac{\sqrt{M}(\omega_+-\omega_-)sc}{A_{xx}}X.
\]
For a normalized Gaussian of width $A_{xx}$,
$\langle (x-x_0)^2\rangle=(2A_{xx})^{-1}$, so
\[
\langle\partial_X\Phi|\partial_X\Phi\rangle
=\frac{A_{xx}}{2}\left(\frac{dx_0}{dX}\right)^2
=\frac{M(\omega_+-\omega_-)^2s^2c^2}{2A_{xx}}.
\]
At $X=0$ the conditional center vanishes and
$\hat H_{\rm el}(0)=(-\partial_x^2+x^2)/2$, hence
\[
\langle\Phi_{00,0}|\hat H_{\rm el}(0)|\Phi_{00,0}\rangle
=\frac{A_{xx}}4+\frac{1}{4A_{xx}}.
\]
Substitution of this conditional Gaussian into the general one-dimensional EF potential
formula, Eq.~\eqref{eq:EFgeometry1d}, reproduces Eq.~\eqref{eq:fernEF} identically.

\section{Fern\'andez's sixth-order correction in geometric variables}
\label{app:budget}

Let $\eta=M^{-1/4}$, so $M^{-1}=\eta^4$, and write
$\omega=\sqrt{1-\beta^2}\,\eta^2$.  Expanding the exact normal-mode frequencies gives
\begin{align}
\sqrt{k_+}
&=1+\frac{\beta^2}{2}\eta^4+O(\eta^8),\\
\sqrt{k_-}
&=\sqrt{1-\beta^2}\,\eta^2
 \left[1-\frac{\beta^2}{2}\eta^4+O(\eta^8)\right].
\end{align}
For electronic $n=0$,
\begin{align}
E_{0v}
&=\frac12+\left(v+\frac12\right)\sqrt{1-\beta^2}\,\eta^2
  +\frac{\beta^2}{4}\eta^4 \\
&\quad
 -\frac{\beta^2}{2}\left(v+\frac12\right)
  \sqrt{1-\beta^2}\,\eta^6+O(\eta^8).
\label{eq:exactexpansionapp}
\end{align}
Subtracting
\begin{equation}
\Ecal_{0v}=\frac12+\left(v+\frac12\right)
\sqrt{1-\beta^2}\,\eta^2
\end{equation}
gives Eq.~\eqref{eq:fernbudget}.

Equation~\eqref{eq:fern_g_app} gives $g_0=\beta^2/2$, so
\begin{equation}
\frac{g_0}{2M}=\frac{\beta^2}{4}\eta^4.
\end{equation}
The BO nuclear oscillator obeys
\begin{equation}
\bk{\chi_v^{(\omega)}}{\Tn\chi_v^{(\omega)}}
=\frac12\left(v+\frac12\right)\omega.
\end{equation}
Using Eq.~\eqref{eq:fern_gapweighted_app},
\begin{equation}
-\frac{2}{M}
\bk{\chi_v^{(\omega)}}{\Tn\chi_v^{(\omega)}}
\sum_{j\ne0}\frac{|d_{j0}|^2}{U_j-U_0}
=-\frac{\beta^2}{2}\left(v+\frac12\right)
\sqrt{1-\beta^2}\,\eta^6.
\end{equation}
Thus the two coefficients in the exact sixth-order expansion are the positive diagonal
metric term and the leading omitted term built from the same derivative couplings with one
additional inverse-gap weighting.  The $O(\eta^8)$ remainder follows directly from the
exact normal-mode expansion.

\end{document}